\newcommand{\NPTime}{\mbox{\sc{NPTime}}}%
\newcommand{\fA}{\mathfrak{A}}
\newcommand{\fB}{\mathfrak{B}}
\newcommand{\fc}{\mathfrak{c}}
\newcommand{\fd}{\mathfrak{d}}
\renewcommand{\phi}{\varphi} % Nicer-looking phi
\renewcommand{\rho}{\varrho} % Nicer-looking phi
\newcommand{\NExpTime}{\textsc{NExpTime}}
\newcommand{\N}{{\mathbb N}}   % Natural numbers
\newcommand{\kint}[1]{\ensuremath{\mbox{int}_{#1}}}
\newcommand{\kin}[1]{\ensuremath{\mbox{in}_{#1}}}
\newcommand{\kout}[1]{\ensuremath{\mbox{out}_{#1}}}
\newcommand{\kinStar}[1]{\ensuremath{{\mbox{in}^\triangleleft_{#1}}}}
\newcommand{\koutStar}[1]{\ensuremath{{\mbox{out}^\triangleleft_{#1}}}}
\newcommand{\keq}[1]{\ensuremath{\mbox{eq}_{#1}}}
\newcommand{\kdeq}[2]{\ensuremath{\mbox{eq}_{#1, #2}}}
\newcommand{\keqDig}[1]{\ensuremath{\mbox{eqDig}_{#1}}}
\newcommand{\kdeqDig}[2]{\ensuremath{\mbox{eqDig}_{#1, #2}}}
\newcommand{\kpred}[1]{\ensuremath{\mbox{pred}_{#1}}}
\newcommand{\kdpred}[2]{\ensuremath{\mbox{pred}_{#1,#2}}}
\newcommand{\kpredDig}[1]{\ensuremath{\mbox{predDig}_{#1}}}
\newcommand{\kdpredDig}[2]{\ensuremath{\mbox{predDig}_{#1,#2}}}
\newcommand{\vertex}{\ensuremath{\mbox{vtx}}}
\newcommand{\tower}[2]{\ensuremath{\mathfrak{t}(#1,#2)}}
\newcommand{\val}[1]{\ensuremath{\mbox{val}_{#1}}}
\newcommand{\FLk}[1]{\ensuremath{\mathcal{FL}}^{#1}}
\newcommand{\FL}{\ensuremath{\mathcal{FL}}}
\newcommand{\kzero}[1]{\ensuremath{\mbox{zero}_{#1}}}
\newcommand{\ianIgnore}[1]{}
\newtheorem{theorem}{Theorem}
\newtheorem{lemma}[theorem]{Lemma}
\newcommand{\sizeOf}[1]{\lVert #1 \rVert}
\newcommand{\eqk}[2]{\ensuremath{\equiv}^{}_{}}
\newcommand{\tpA}{\ensuremath{{\rm tp^{\fA}}}}
\newcommand{\tpB}{\ensuremath{{\rm tp^{\fB}}}}
\newcommand{\tpBp}{\ensuremath{{\rm tp^{\fB'}}}}
\newcommand{\ConA}{\ensuremath{{\rm Con^{\fA}}}}
\newcommand{\ftp}{\ensuremath{{\rm ftp}}}
\newcommand{\myWedge}{\ensuremath{\hspace*{-1mm}\wedge\hspace*{-1mm}}}
\newcommand{\myRightarrow}{\ensuremath{\hspace*{-1mm}\rightarrow\hspace*{-1mm}}}
\title{Quine's Fluted Fragment Revisited}
\author{I. Pratt-Hartmann}
\address{School of Computer Science\\ Manchester University\\
Manchester M13 9PL, UK\and Instytut  Informatyki\\
Uniwesytet Opolski\\45--040 Opole, Poland}
\email{ipratt@cs.man.ac.uk}
\author{Wies\l{}aw Szwast}
\address{Instytut  Informatyki\\
Uniwesytet Opolski\\45--040 Opole, Poland}
\email{szwast@math.uni.opole.pl}
\author{Lidia Tendera}
\address{Instytut Informatyki\\
Uniwesytet Opolski\\45--040 Opole, Poland}
\email{tendera@math.uni.opole.pl}
\begin{document}

\maketitle
\begin{abstract}
We study the fluted fragment, a decidable fragment of first-order logic with an unbounded number of variables,
originally identified in 1968 by W.V.~Quine. We show that the satisfiability problem for this fragment has non-elementary complexity, thus refuting an earlier published claim by W.C. Purdy that it is in \NExpTime.
More precisely, we consider $\FL^m$, the intersection of the fluted fragment and the $m$-variable fragment of first-order logic, for all $m \geq 1$. We show that, for $m \geq 2$, this sub-fragment forces $\lfloor m/2\rfloor$-tuply exponentially large models, and that its satisfiability problem is $\lfloor m/2\rfloor$-\NExpTime-hard. We further establish that, for $m \geq 3$, any satisfiable $\FL^m$-formula has a model of at most ($m-2$)-tuply exponential size, whence the satisfiability (= finite satisfiability) problem for this fragment is in ($m-2$)-\NExpTime. Together with other, known, complexity results, this provides tight complexity bounds for $\FL^m$ for all $m \leq 4$. 
\end{abstract}

\section{Introduction}
The fluted fragment, here denoted $\FL$, is a fragment of first-order logic in which, roughly speaking, the order of quantification of variables coincides with the order in which those variables appear as arguments of predicates. Fluted formulas arise naturally as first-order translations of quantified English sentences
in which no quantifier-rescoping occurs, thus:
\begin{align}
& \mbox{
\begin{minipage}{10cm}
\begin{tabbing}
No student admires every professor\\
$\forall x_1 (\mbox{student}(x_1) \rightarrow \neg \forall x_2 (\mbox{prof}(x_2) \rightarrow \mbox{admires}(x_1, x_2)))$
\end{tabbing}
\end{minipage}
}
\label{eq:eg1}\\
& \mbox{
\begin{minipage}{10cm}
\begin{tabbing}
No lecturer introduces any professor to every student\\
$\forall x_1 ($\=$\mbox{lecturer}(x_1) \rightarrow$
 $\neg \exists x_2 ($\=$\mbox{prof}(x_2)
\wedge$\\
\> $\forall x_3 (\mbox{student}(x_3) \rightarrow \mbox{intro}(x_1,x_2,x_3))))$.
\end{tabbing}
\end{minipage}
}
\label{eq:eg2}
\end{align}

The origins of the fluted fragment can be traced to a paper given by W.V.~Quine to the 1968 {\em International Congress of Philosophy}~\cite{purdy:quine69}, in which the author defined what he called the {\em homogeneous $m$-adic formulas}.
In these formulas, all predicates have the same arity $m$, and all atomic formulas have the same argument sequence $x_1, \dots, x_m$. Boolean operators and quantifiers may be freely applied, except that the order of quantification must follow the order of arguments: a quantifier binding an occurrence of $x_i$ may only be applied to a subformula in which all occurrences of $x_{i+1}, \dots, x_m$ are already bound. Quine explained how Herbrand's decision procedure for monadic first-order logic extends to cover all homogeneous $m$-adic formulas.

\noindent
The term {\em fluted logic} first appears (to the present authors' knowledge) in \linebreak
Quine~\cite{purdy:quine76b}, where the restriction that all predicates have the same arity is abandoned, a relaxation which, according to Quine,
does not affect the proof of decidability of satisfiabilty. The allusion is presumably architectural: we are invited to think of arguments of predicates as being `lined up' in columns.
Quine's motivation for defining the fluted fragment was to locate the boundary of decidability in the context of his reconstruction of first-order logic in terms of \textit{predicate-functors},
which Quine himself described as a `modification of Bernays' modification of Tarski's cylindrical algebra'~\cite[p.~299]{purdy:quine76a}. Specifically, the fluted fragment can be identified by dropping from full predicate functor logic those functors associated with the permutation and identification of variables, while retaining those concerned with cylindrification and Boolean combination.

Notwithstanding its predicate-functorial lineage, the fluted fragment has, as we shall see, a completely natural characterization within the standard r\'{e}gime of bound variable quantification, and thus constitutes an interesting fragment of first-order logic in its own right. In fact, $\FL$ overlaps in expressive power with various other such fragments. For example, \textit{Boolean modal
	logic} (Lutz and Sattler~\cite{purdy:ls01})
maps, under the standard first-order translation, to $\FL$---in fact, to $\FLk{2}$, the fluted fragment restricted to just two variables.
On the other hand, even $\FLk{2}$ is not contained within the so-called {\em guarded fragment} of first-order logic (Andr\'{e}ka, van Benthem and N\'{e}meti~\cite{purdy:avbn98}): the formula~\eqref{eq:eg1}, for example, is not equivalent to any guarded formula. A more detailed comparison of the fluted fragment to other familiar decidable fragments can be found in Hustadt, Schmidt and Georgieva~\cite{purdy:hsg04}.

Quine never published a proof of his later claims regarding the full fluted fragment; indeed, Noah \cite{purdy:noah80} later claimed that, on the contrary, Herbrand's technique does not obviously extend from homogeneous $m$-adic logic to the fluted fragment, and that consequently, the decidability of the satisfiability problem for the latter should be regarded as open. This problem---together with the corresponding problems for various extensions of the fluted fragment---was considered in a series of papers in the 1990s by W.C.~Purdy~\cite{purdy:purdy96b,purdy:purdy96a,purdy:purdy99,purdy:purdy02}. The decidability of $\FL$ is proved in~\cite{purdy:purdy96a}, while
in~\cite[Corollary~10]{purdy:purdy02} it is claimed that this fragment has the exponential-sized model property: if a fluted formula $\phi$ is satisfiable, then it is satisfiable over a domain of size bounded by an exponential function of the number of symbols in $\phi$. Purdy concluded~\cite[Theorem~13]{purdy:purdy02} that the satisfiability problem for $\FL$ is \NExpTime-complete.

These latter claims are false. In the sequel, we show that, for $m \geq 2$, the fluted fragment restricted to just $m$ variables, denoted $\FLk{m}$, can force models of $(\lfloor m/2 \rfloor)$-tuply exponential size, and that its
satisfiability problem is $(\lfloor m/2 \rfloor)$-\NExpTime-hard. It follows that there is no elementary bound on the size of models of satisfiable fluted formulas, and that the satisfiability problem for $\FL$ is non-elementary.\footnote{An extended abstract with this result was published in \cite{P-HST16}.} On the other hand, we also show that, for $m \geq 3$, any satisfiable formula of the $m$-variable fluted fragment has a model of $(m-2)$-tuply exponential size, so that the satisfiability problem for this sub-fragment is contained in $(m-2)$-\NExpTime. Thus, $\FL$ has the finite model property, and  its satisfiability (= finite satisfiability) problem is decidable, but not elementary. In the case $m =2$, $\FL^2$ is contained within the 2-variable fragment of first-order logic, whence its satisfiability problem is in $\NExpTime$ by the well-known result of Gr\"{a}del, Kolaitis and Vardi~\cite{purdy:gkv97}, which matches the lower bound reported above. The 
fragment $\FL^0$ is evidently the same as propositional logic, and the 
fragment $\FL^1$ likewise coincides with the 1-variable fragment of first-order logic, so that both these fragments have
\NPTime-complete satisfiability problems. 
Counting ``$0$-tuply exponential'' as a synonym for ``polynomial'',  we see that for $0 \leq m \leq 4$, $\FL^m$ is 
$(\lfloor m/2 \rfloor)$-\NExpTime-complete. For $m >4$, the above complexity bounds for 
$\FLk{m}$  leave a gap between $(\lfloor m/2 \rfloor)$-\NExpTime{} and $(m-2)$-\NExpTime.

We mention at this point another incorrect claim by Purdy concerning an extension of the fluted fragment. In Purdy~\cite{purdy:purdy99}, the author  considers what he calls {\em extended fluted logic} (EFL), in which, in addition to the usual predicate functors of fluted logic, we have an \textit{identity functor} (essentially: the equality predicate), {\em binary conversion} (the ability to exchange arguments in binary atomic formulas) and {\em functions} (the requirement that certain specified binary predicates be interpreted as the graph of a function.) Purdy claims (Corollary~19, p.~1460) that EFL has the \textit{finite model property}: if a formula of this fragment is satisfiable, then it is satisfiable over a finite domain. But EFL evidently contains the formula
\begin{equation*}
\forall x_1 \forall x_2 (r(x_1,x_2)  \rightarrow f(x_1,x_2)) \wedge
 \exists x_1 \forall x_2 \neg r(x_1, x_2) \wedge  \forall x_1 \exists x_2 r(x_2, x_1),
\end{equation*}
where $f$ is required to be interpreted as the graph of a binary function;
and this is an axiom of infinity. In view of these observations, it seems only prudent to treat Purdy's series of articles with caution.

An independent, resolution-based decision procedure for the fluted fragment 
was presented by Schmidt and Hustadt~\cite{purdy:SH00}.
No complexity bounds are given. Moreover, that paper omits detailed proofs, and these have, to the present authors' knowledge, never been published. 

In the sequel, we show that,
for $m \geq 3$, 
the satisfiability problem for $\FL^m$ is in $(m-2)$-\NExpTime. 
Specifically, we
use a model-construction-based technique to show that any satisfiable formula of $\FL^3$ has a model
of size bounded by an exponential function of the size of $\phi$; 
and we use resolution theorem-proving to reduce the satisfiability problem for $\FL^m$ ($m \geq 1$)
to the corresponding problem for
$\FL^{m-1}$, at the cost of an exponential increase in the signature and the size of the formula.
Our proof is shorter and more perspicuous than the arguments for the decidability of the satisfiability problem for $\FL$ given  in either Purdy~\cite{purdy:purdy96a} or Schmidt and Hustadt~\cite{purdy:SH00}, and yields better complexity bounds than could---in the absence of non-trivial refinements---be derived from those approaches.

The structure of this paper is as follows. Section~\ref{sec:prelim} gives some basic definitions. In Section~\ref{sec:lower}, we show that formulas of $\FLk{2m}$ can force models of $m$-tuply exponential size, and indeed that the satisfiability problem for $\FLk{2m}$ is $m$-\NExpTime-hard, thus disproving the
results claimed in Purdy~\cite{purdy:purdy02}. In Section~\ref{sec:upper}, 
we show that, for $m \geq 3$,
any satisfiable formula of $\FL^m$ has a model of size at most $(m-2)$-tuply exponential in the size of $\phi$, and hence that the satisfiability (= finite satisfiability) problem for $\FL^m$ is in $(m-2)$-\NExpTime.

\section{Preliminaries}
\label{sec:prelim}
Fix a sequence of variables $\bar{x}_\omega= x_1, x_2, \ldots$ \ . Let $\sigma$ be a purely relational signature $\sigma$---i.e.,~a signature containing predicates of any arity (including 0), but
no function-symbols or individual constants. A \textit{fluted atomic formula} (or: \textit{fluted atom}) of $\FL^{[k]}_\sigma$ is an expression $p(x_\ell, \ldots, x_k)$, where $\ell \leq k+1$, $p \in \sigma$ has arity $(k-\ell +1)$, and $x_\ell, \dots, x_k$ is a contiguous subsequence of $\bar{x}_\omega$. If $\ell = k+1$, then $p$ has arity 0---in other words, is
a propositional letter. A {\em fluted literal} of $\FL^{[k]}_\sigma$ is either a fluted atom of $\FL^{[k]}_\sigma$ or the negation of such. 
We define the sets of formulas $\FL^{[k]}_\sigma$ (for $k \geq 0$) over $\sigma$ by structural induction as follows:
(i) any fluted atom of $\FL^{[k]}_\sigma$ is a formula of $\FL^{[k]}_\sigma$; 
(ii) $\FL^{[k]}_\sigma$ is closed under boolean combinations;
(iii) if $\phi$ is in $\FL^{[k+1]}_\sigma$, then $\exists x_{k+1} \phi$ and $\forall x_{k+1} \phi$
are in $\FL^{[k]}_\sigma$. We normally suppress reference to $\sigma$, writing $\FL^{[k]}$ for $\FL^{[k]}_\sigma$.
In this context, a fluted atom of $\FL^{[k]}_\sigma$ will simply be called a \textit{fluted $k$-atom}, and similarly for
literals.
In the definition of fluted $k$-atoms, the case $\ell = k+1$ implies that if $p \in \sigma$ is a proposition letter, 
then $p$ is a fluted $k$-atoms for all $k\geq 0$ (and hence is a fluted $k$-literal and indeed a formula of $\FL^{[k]}$. 
The set of \textit{fluted formulas} is defined as \smash{$\FL = \bigcup_{k\geq 0} \FL^{[k]}$}. A \textit{fluted sentence} is a fluted formula over an empty set of variables, i.e.~an element of $\FL^{[0]}$.
Thus, when forming Boolean combinations in the fluted fragment, all the combined formulas must have as
their free variables some contiguous sub-word of $\bar{x}_\omega$; and when quantifying, only the free variable with highest index
may be bound. Note however that proposition letters may occur freely in fluted formulas.

Denote by $\FLk{m}$ the sub-fragment of $\FL$ consisting of
those formulas featuring at most $m
$ variables, free or bound.
Do not confuse $\FLk{m}$ (the set of fluted formulas with $m$ variables, free or bound) with $\FL^{[m]}$ (the set of fluted formulas with $m$ {\em free} variables). These are of course, quite different.

To avoid tedious repetition of formulas, we write $\pm \phi$ to stand ambiguously for the
formulas $\phi$ and $\neg \phi$. However, we adopt the convention that multiple occurrences of the symbol $\pm$ in a displayed formula are all resolved in the same way. Thus, for example, the 
expression 
\begin{equation*}
\bigwedge_{i=0}^{n-1} \ \bigwedge_{\ell=0}^{L} \forall x_1 (\kint{1}(x_1) \wedge \pm p_i(x_1) \rightarrow
\forall x_2  \cdots \forall x_{\ell+1} \pm p_i^{\ell}(x_1, \dots, x_{\ell+1})),
\end{equation*}
%which occurs as~\eqref{eq:phi2} in the sequel, 
stands for a \textit{pair} of $\FL^{L+1}$-formulas: one formula in which all $2n(L+1)$ occurrences of the symbol $\pm$ are deleted, and another in which they are all replaced by the symbol $\neg$.

We make extensive use of the tetration function
$\tower{k}{n}$, defined, for $n, k \geq 0$, by induction as follows:
\begin{align*}
\tower{0}{n} & = {n}\\
\tower{k+1}{n} &= 2^{\tower{k}{n}}.
\end{align*}
Thus, $\tower{1}{n} = 2^n$, $\tower{2}{n} = 2^{2^n}$, and so on. 

\section{Lower bound}

\label{sec:lower}
In this section, we establish lower complexity bounds for the fluted fragment.
Theorem~\ref{thm:bigModels} shows that
an $\FL^{2m}$-formula of size $O(n^2)$
can force models of size at least $\tower{m}{n}$, thus contradicting Corollary~10 of Purdy~\cite{purdy:purdy02}.
Theorem~\ref{thm:lower} shows that the satisfiability problem for
\smash{$\FL^{2m}$} is $m$-\NExpTime-hard, thus contradicting Theorem~11 of Purdy~\cite{purdy:purdy02}.

As a preliminary, for any $z \geq 0$, we take the (\textit{canonical}) {\em representation} of any integer $n$ in the range ($0 \leq n < 2^{z}$) to be the bit-string
$\bar{s} = s_{z-1}, \dots, s_0$ of length $z$, where $n = \sum_{i=0}^{z-1} s_{i} \cdot 2^i$. (Thus, $s_0$ is the least significant bit.)
Where $z$ is clear from context, this representation is unique.
Observe that, if, in addition, an integer $n'$ in the same range is represented by $s'_{z-1}, \dots, s'_0$,
then $n' = n-1 \mod 2^z$ if and only if, for all $i$ ($0 \leq i < z$):
\begin{equation*}
s'_i =
\begin{cases}
1- s_i & \text{if, for all $j$ ($0 \leq j < i$), $s_j = 0$;}\\
s_i    & \text{otherwise}.
\end{cases}
\end{equation*}
This simple observation---effectively, the algorithm for decrementing an integer represented in binary---will feature at various points in the proof of the following theorem. 

\begin{theorem}
\label{thm:bigModels}
For all $m \geq 1$, there exists a sequence of satisfiable
sentences $\{\phi_n\}_{n \in \N} \in \FL^{2m}$ 
such that $\sizeOf{\phi_n}$ grows polynomially with $m$ and $n$ \textup{(}and indeed quadratically in $n$
for fixed $m$\textup{)}, but the
smallest satisfying model of $\phi_n$ has at least $\tower{m}{n}$ elements. Hence, there is no elementary
bound on the size of models of satisfiable sentences in $\FL$.
\end{theorem}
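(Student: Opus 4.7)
The plan is to construct, for each fixed $m \geq 1$, a sequence of $\FLk{2m}$-sentences $\phi_n$ of size $O(mn^2)$ each of whose models hosts an $m$-level hierarchical binary counter and therefore must contain at least $\tower{m}{n}$ elements. I would partition the domain by unary predicates $\kint{1}, \ldots, \kint{m}$ into levels. Elements at level $1$ carry $n$ unary flags $p_0^1, \ldots, p_{n-1}^1$ that directly encode an address in $\{0, \ldots, 2^n-1\} = \{0, \ldots, \tower{1}{n}-1\}$. For $k \geq 2$, each level-$k$ element is linked to a private cycle of level-$(k{-}1)$ ``positions,'' and its address is the $\tower{k-1}{n}$-bit string obtained by reading a designated unary bit predicate $p^k$ at each of its positions, viewed as indexed by the level-$(k{-}1)$ address of that position; this affords $\tower{k}{n}$ possible addresses at level $k$.

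Next I would define by induction on $k$ two fluted formulas $\keq{k}$ and $\kpred{k}$ expressing address-equality and address-binary-predecessor, respectively. Writing $u = x_{2(m-k)+1}$ and $v = x_{2(m-k)+2}$, the inductive clause for $\keq{k}(u,v)$ says that for every pair of positions of $u$ and $v$, quantified as the next two variables $x_{2(m-k)+3}, x_{2(m-k)+4}$, that satisfy $\keq{k-1}$, the bits $p^k$ coincide; $\kpred{k}$ is analogous but uses the binary-decrement rule locally, with the ``all lower-address positions carry $p^k = 0$'' guard handled by a unary ``cumulative-zero'' flag $q^k$ that is propagated along $u$'s intra-parent position cycle. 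The base case $k=1$ is spelt out in the two innermost variables $x_{2m-1}, x_{2m}$ using the $n$ bit predicates, which is the sole source of the quadratic $\Theta(n^2)$ contribution to $\sizeOf{\phi_n}$. Finally, at each level $k$ I would assert that every level-$k$ element has a $\ksucc{k}$-successor at level $k$ satisfying $\kpred{k}$, together with a seed axiom asserting the existence of a level-$k$ element; these axioms jointly force a cycle of maximal length $\tower{k}{n}$, so that taking $k = m$ yields at least $\tower{m}{n}$ distinct elements.

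The main obstacle is keeping every subformula strictly fluted, so that every atom uses a contiguous suffix of the currently bound variables and every quantifier binds only the innermost free one. A naive formulation would want to say things like ``$v'$ is a position of $v$'' while $u$ occupies a smaller-index variable, yielding atoms with non-contiguous argument lists. I would sidestep this by keeping the hierarchy strictly nested in variable order: at level $k$ the two elements being compared sit in adjacent variables $x_{2(m-k)+1}, x_{2(m-k)+2}$, their respective level-$(k{-}1)$ positions sit in $x_{2(m-k)+3}, x_{2(m-k)+4}$, and so on down to $x_{2m-1}, x_{2m}$ at level $1$; parent--position membership is expressed only between adjacent variables, while the bit $p^k$ and the cumulative-zero flag $q^k$ live as unary predicates on the position itself. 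The intra-parent position cycle is likewise a binary predicate on an adjacent pair of variables, so the propagation clause for $q^k$ stays fluted, and no non-contiguous atom ever arises.

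Once the fluted encoding is in place, a routine outer induction on $k$ confirms that $\keq{k}$ captures genuine address-equality and $\kpred{k}$ the true binary-decrement relation, so the level-$k$ cycle really does force $\tower{k}{n}$ address-distinct, and hence actually distinct, level-$k$ elements. Setting $k = m$ gives the theorem, with total size $O(mn^2)$: polynomial in $m$ and $n$, and quadratic in $n$ for fixed $m$, as claimed.
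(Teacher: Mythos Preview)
Your high-level strategy---a hierarchy of counters, with level-$k$ addresses encoded as $\tower{k-1}{n}$-bit strings indexed by level-$(k{-}1)$ objects, and inductive definitions of equality and predecessor---is exactly right and matches the paper. The gap is in the flutedness argument, precisely at the point you flag as the main obstacle.

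With $u = x_{2(m-k)+1}$, $v = x_{2(m-k)+2}$, and their positions at $x_{2(m-k)+3}, x_{2(m-k)+4}$, a binary parent--position atom linking $u$ to either of the last two variables is \emph{not} fluted: a fluted atom in $\FL^{[2(m-k)+3]}$ must have the form $p(x_\ell,\dots,x_{2(m-k)+3})$, so a binary atom there is $p(x_{2(m-k)+2},x_{2(m-k)+3})$, which can only relate $v$ to its position, never $u$ to anything. The same issue recurs for the bit comparison $p^k(x_{2(m-k)+3}) \leftrightarrow p^k(x_{2(m-k)+4})$: the left-hand unary atom does not end in the innermost variable, so it is not a fluted atom of $\FL^{[2(m-k)+4]}$ either. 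In short, ``keep everything between adjacent variables'' cannot work, because you genuinely need to reach back from the innermost variable to $u$ across one or more intervening variables, and binary (or unary) predicates on a suffix will not do that.

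The paper's fix is twofold. First, it introduces, for each predicate that must be ``read'' from a deeper context, a family of \emph{padded} copies with semantically inert trailing arguments (e.g.\ $p_i^\ell$, $\kin{k}^\ell$, $\kdeq{k}{\ell}$), defined by fluted axioms of the shape $\forall x_1(\pm p(x_1)\rightarrow\forall x_2\cdots\forall x_{\ell+1}\,\pm p^\ell(x_1,\dots,x_{\ell+1}))$; this lets information about $x_1$ be carried down as a fluted atom ending in $x_{\ell+1}$. Second---and this is the point your private-cycle encoding cannot replicate---it abandons private positions in favour of \emph{shared} index objects together with a pair of mirror predicates $\kin{k}(x_1,x_2)$ and $\kout{k}(x_2,x_3)$, tied by a ``harmony'' axiom. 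The mirror lets one read the bit of the $(k{+}1)$-integer sitting at $x_2$ either from the left (via $\kin{k}$, padded forward) or from the right (via $\kout{k}$, on the current suffix), which is exactly what is needed to compare two $(k{+}1)$-integers flanking a block of inert variables. Without some device of this kind, the inductive definition of $\keq{k}$ cannot be written flutedly.
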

\begin{proof}
Fix positive integers $m$ and $n$.
Consider a signature $\sigma_{m,n}$ featuring:
\begin{itemize}
	\item[-] unary predicates $p_0, \dots, p_{n-1}$;
	\item[-] for all $k$ in the range $1 \leq k \leq m$, a unary predicate $\kint{k}$;
	\item[-] for all $k$ in the range $1 \leq k < m$, binary predicates $\kin{k}$, $\kout{k}$.
\end{itemize}
(We shall add further predicates to $\sigma_{m,n}$ in the course of the proof.)
When working within a particular structure, we call any element satisfying the unary predicate $\kint{k}$ in that structure a $k$-\textit{integer}. Each $k$-integer, $b$, will be associated with an integer {value}, $\val{k}(b)$, between 0 and  $\tower{k}{n}-1$.
For $k=1$, this value will be encoded by $b$'s satisfaction of the unary predicates $p_0, \dots, p_{n-1}$. Specifically, for any 1-integer $b$, define
 $\val{1}(b)$ to be the integer canonically represented by the $n$-element
bit-string $s_{n-1}, \dots, s_0$, where, for all $i$ ($0 \leq i < n$),
\begin{equation*}
s_i =  \begin{cases}
1 & \text{if $\fA \models p_i[b]$};\\
0 & \text{otherwise.}
\end{cases}
\end{equation*}
On the other hand, if $b$ is a ($k+1$)-integer ($k \geq 1$), then $\val{k+1}(b)$ will be encoded by
the way in which the various $k$-integers are related to $b$ via the
predicate $\kin{k}$. Specifically,
for any $k$ ($1 \leq k <m$) and any ($k+1$)-integer $b$, define $\val{k+1}(b)$ to be
the integer canonically represented by the bit-string $s_{N-1}, \dots, s_0$ of length $N = \tower{k}{n}$
where, for all $i$ ($0 \leq i < N$),
\begin{equation*}
s_i =
\begin{cases}
1 & \text{if $\fA \models \kin{k}[a,b]$ for some ($k$)-integer $a$
 s.t.~$\val{k}(a) = i$;}\\
0 & \text{otherwise}.
\end{cases}
\end{equation*}
We shall be interested in the case where $\fA$ satisfies the following property, for all $k$ ($1 \leq k \leq m$).
\begin{description}
\item[$k$-{covering}] The function $\val{k}: \kint{k}^\fA \rightarrow [0,\tower{k}{n}-1]$ is surjective.
\end{description}
For technical reasons, we create a duplicate (mirror image) encoding of $\val{k+1}(b)$
in terms of the way in which $b$ is related to the various $k$-integers via the
predicate $\kout{k}$. Specifically, we shall be interested in the case where $\fA$ satisfies the following property, for all $k$ ($1 \leq k \leq m$)
\begin{description}
\item[$k$-{harmony}] If $k > 1$, then, for all $k$-integers $b$ and all ($k-1$)-integers $a$, $a'$
 in $\fA$ such that $\val{k-1}(a)= \val{k-1}(a')$, 
\mbox{$\fA \models \kin{k-1}[a,b] \Leftrightarrow \fA \models \kout{k-1}[b,a']$}.
\end{description}

Let us pause to consider the properties of $k$-{covering} and $k$-{harmony} for various values of $k$.
If $k < m$, $k$-covering ensures that, when we want to know what the $i$th bit in the canonical binary representation of a ($k+1$)-integer $b$ is (where $0 \leq i < \tower{k}{n}$), then  {\em there exists} a $k$-integer $a$
such that $\val{k}(a) = i$, and for which we can ask whether $\fA \models \kin{k}[a,b]$.
Conversely,
($k+1$)-harmony ensures that, if there are \textit{many} $k$-integers $a$ satisfying $\val{k}(a) = i$, then it does not matter which one we consult. For
if $\val{k}(a)= \val{k}(a')$, then by two applications of  ($k+1$)-harmony,
$\fA \models \kin{k}[a,b] \Leftrightarrow \fA \models \kout{k}[b,a] \Leftrightarrow \fA \models \kin{k}[a',b]$.

Our strategy will be to construct a satisfiable $\FLk{2m}$-formula $\Phi_{m,n}$
in the signature $\sigma_{m,n}$ such that any model $\fA \models \Phi_{m,n}$
satisfies $k$-{covering} and $k$-{harmony} for all $k$ ($1 \leq k \leq m$). It then follows from $m$-covering that
$|A| \geq \tower{m}{n}$, proving the theorem.
The signature $\sigma_{m,n}$ will feature several auxiliary predicates. In particular, we take $\sigma_{m,n}$ to contain:
\begin{enumerate}[($i$)]
\item
the unary predicates $\kzero{1}, \dots, \kzero{m}$; 
\item the binary predicates $\kdpred{1}{0}, \dots, \kdpred{m}{0}$;
\item
the ternary predicates $\kdpred{1}{1}, \dots, \kdpred{m-1}{1}$;
\item
for all $k$ ($1 \leq k \leq m$) and all $\ell$ ($0 \leq \ell \leq 2(m-k)$), the ($\ell+2$)-ary predicate $\kdeq{k}{\ell}$.
\end{enumerate}
Further predicates in $\sigma_{m,n}$ will be introduced later, as and when they are needed.
Observe from  ($iii$) that, setting $m=1$, the list of ternary predicates
in $\sigma_{1,n}$ is empty. This is as it should be: 
if $m=1$, the formula $\Phi_{m,n}$ we want to construct must lie in $\FL^2$,
and thus may not use any ternary predicates. Observe also in this regard that, in ($iv$), as $k$ increases from $1$ to $m$, the maximal value of the index $\ell$ in the predicates $\kdeq{k}{\ell}$ decreases, in steps of 2, from $2m-2$ down to $0$; hence the
maximal arity of these predicates decreases from $2m$ to $2$, whence these predicates may all be used in $\FL^{2m}$-formulas.

We show that any model $\fA \models \Phi_{m,n}$ satisfies---in addition to
$k$-covering and $k$-harmony---the following properties for all $k$ ($1 \leq k \leq m$)
concerning the interpretation of these predicates.
\begin{description}
\item[$k$-{zero}] For all $k$-integers $b$,
$\fA \models \kzero{k}[b] \Leftrightarrow \val{k}(b) = 0$.
\item[$k$-{equality}]
For all $\ell$  ($0 \leq \ell \leq 2(m-k)$), all $k$-integers $b$, $b'$ and
all $\ell$-tuples of elements $\bar{c}$,
$\fA \models \kdeq{k}{\ell}[b, \bar{c}, b'] \Leftrightarrow
\val{k}(b) = \val{k}(b')$.
\item[$k$-{predecessor}] For all $\ell$  ($0 \leq \ell \leq \min(m-k,1)$), all $k$-integers $b$, $b'$
and
all $\ell$-tuples of elements $\bar{c}$, $\fA \models \kdpred{k}{\ell}[b,\bar{c},b'] \Leftrightarrow \val{k}(b') = \val{k}(b)-1$, modulo $\tower{k}{n}$.
\end{description}
Note that, in the definition of  the property $k$-predecessor, $\bar{c}$ is either the empty sequence or a singleton.
Indeed, the bounds on $\ell$ amount to saying that $\ell$ takes values
0 or 1, except when $k =m$, in which case it takes only the value $0$. (Recall that $\sigma_{m,n}$ does not feature the predicate $\kdpred{m}{1}$.) 

Thus, in a structure satisfying $k$-{zero},
$\kzero{k}(x_1)$  can be read as ``$x_1$ is zero''; and
in a structure satisfying $k$-{predecessor}, $\kdpred{k}{0}(x_1, x_2)$, as ``$x_2$ is the predecessor of $x_1$'', and
$\kdpred{k}{1}(x_1, x_2, x_3)$ as ``$x_3$ is the predecessor of $x_1$''. Notice that, in the latter
case, the argument $x_2$ is semantically inert. Similarly, in a structure satisfying $k$-{equality},
$\kdeq{k}{\ell}(x_1, \dots, x_{\ell+2})$
can be read as ``$x_1$ is equal to $x_{\ell+2}$'', with the $\ell$ arguments $x_2, \dots, x_{\ell+1}$ again semantically inert. When naming predicates, we employ the convention that the first subscript, $k$, serves as a reminder that its primary arguments are typically assumed to be $k$-integers; the second subscript, $\ell$, indicates that $\ell$ (possibly 0) semantically inert arguments have been inserted between the primary arguments.

To prove that any model $\fA \models \Phi_{m,n}$ satisfies the properties of 
$k$-covering, $k$-harmony, $k$-zero, $k$-equality and $k$-predecessor
for all $k$ ($1 \leq k \leq m$), we proceed by induction on $k$. For ease of reading, we introduce the
various conjuncts of $\Phi_{m,n}$ as they are required in the proof. Appeals to the inductive hypothesis are indicated by the initials IH.

\medskip
\noindent
{\em Base case $(k=1)$\textup{:} }
Let $b$ be a 1-integer, and
recall that $\val{1}(b)$ is defined by $b$'s satisfaction of the predicates
$p_0, \dots, p_{n-1}$.
We proceed to secure the properties required for the base case of the induction. The property $1$-harmony is trivially satisfied. We secure $1$-zero by adding to $\Phi_{m,n}$ the conjunct
\begin{align}
& \tag{$\Phi_1$}
\forall x_1 (\kint{1}(x_1) \rightarrow (\kzero{1}(x_1) \leftrightarrow \bigwedge_{i=0}^{n-1} \neg p_i(x_1))).
\label{eq:phi1}
\end{align}

\smallskip

\noindent
Thus,
if $a$ is a 1-integer, $\fA \models \kzero{1}[b] \Leftrightarrow \val{1}(b) = 0$.
To do the same for $1$-predecessor and $1$-equality,
we proceed as follows. Letting $L = 2m-1$,
we add to $\sigma_{m,n}$ an ($\ell+1$)-ary predicate, $p^\ell_i$, for all $i$ ($0 \leq i < n$) and all $\ell$ ($0 \leq \ell \leq L$), and we add to $\Phi_{m,n}$ the corresponding pair of conjuncts
\begin{equation}
\begin{split}
&\!\! \bigwedge_{i=0}^{n-1} \ \bigwedge_{\ell=0}^{L} \forall x_1 (\kint{1}(x_1) \wedge \pm p_i(x_1) \rightarrow\\
& \qquad \qquad  \quad  \forall x_2  \cdots \forall x_{\ell+1} \pm p_i^{\ell}(x_1, \dots, x_{\ell+1})).
\end{split}
\tag{$\Phi_2$}
\label{eq:phi2}
\end{equation}
Note that this really is a \textit{pair} of formulas: all occurrences of the
$\pm$ sign must be resolved in the same way.

Then, for any 1-integer $b$ and any $\ell$-tuple $\bar{c}$ from $A$,
\begin{equation}
\fA \models p^\ell_i[b, \bar{c}] \Leftrightarrow \fA \models p_i[b].
\label{eq:pid}
\end{equation}
In effect, the conjuncts~\eqref{eq:phi2} append  semantically inert arguments to each of the predicates $p_i$. This technique will be helpful at several points in the sequel, and we employ the convention that a superscript $\ell$ on a predicate letter indicates that the corresponding
undecorated predicate has $\ell$ semantically inert arguments {\em appended} to its primary arguments. Note that $p^0_i$ is simply equivalent to $p_i$.

Now we can secure the property $1$-equality. For all $\ell$ ($0 \leq \ell < 2m-2$), let
$\varepsilon_{1,\ell}(x_1, \dots, x_{\ell+2})$ abbreviate the formula: 
\begin{equation*}
\bigwedge_{i=0}^{n-1} (p^{\ell+1}_i(x_1, \dots, x_{\ell+2}) \leftrightarrow p_i(x_{\ell+2})).
\end{equation*}
We see from~\eqref{eq:pid} that
$\varepsilon_{1,\ell}(x_1, \dots, x_{\ell+2})$ in effect states that (for $x_1$ and $x_{\ell+2}$ 1-integers) the values of $x_1$ and $x_{\ell +2}$ are identical. We therefore add to $\Phi_{m,n}$ the conjuncts
\medskip
\begin{multline}
\smash{\bigwedge_{\ell=0}^{2m-2}} \forall x_1 (\kint{1}(x_1) \rightarrow \\
\hspace{-3cm} \forall x_2 \cdots \forall x_{\ell+2} (\kint{1}(x_{\ell+2}) \rightarrow\\
\kdeq{1}{\ell}(x_1, \dots, x_{\ell+2}) \leftrightarrow
\varepsilon_{1,\ell}(x_1, \dots, x_{\ell+2}))).
\tag{$\Phi_3$}
\label{eq:phi3}
\end{multline}
Thus, for any $1$-integers $b$, $b'$ in $\fA$ and any $\ell$-tuple $\bar{c}$ from $A$ ($0 \leq \ell < 2m-2$), $\fA \models \kdeq{1}{\ell}[b, \bar{c}, b'] \Leftrightarrow \val{1}(b) = \val{1}(b')$.

Turning to the property $1$-predecessor, assume for the moment that $m >1$, so that
the predicates $\kdpred{1}{0}$ and $\kdpred{1}{1}$ are both in $\sigma_{m,n}$.
For $0 \leq \ell \leq 1$, let $\pi_{1,\ell}(x_1, \dots, x_{\ell+2})$ abbreviate the formula
\bigskip
\begin{equation*}
\begin{split}
& \smash{\bigwedge_{i=0}^{n-1}} (
[\smash{\bigwedge_{j=0}^{i-1}} \neg p^{\ell+1}_j(x_1, \dots, x_{\ell+2}) \rightarrow
%& \qquad  \qquad  \qquad
   (p^{\ell+1}_i(x_1, \dots, x_{\ell+2}) \leftrightarrow \neg p_i(x_{\ell+2}))]
\wedge\\
& \qquad \qquad
[{\bigvee_{j=0}^{i-1}} p^{\ell+1}_j(x_1, \dots, x_{\ell+2}) \rightarrow
%& \qquad \qquad  \qquad
   (p^{\ell+1}_i(x_1, \dots, x_{\ell+2}) \leftrightarrow p_i(x_{\ell+2}))]).
\end{split}
\end{equation*}
From our preliminary remarks on the canonical representations of numbers by bit-strings, we see that
$\pi_{1,\ell}(x_1, \dots, x_{\ell+2})$ codes the statement that (for $x_1$ and $x_{\ell+2}$ 1-integers) the value of $x_{\ell +2}$ is one less than that of $x_1 \mod 2^n$ (empty disjunction is defined as false and empty conjunction as true).
We then add to $\Phi_{m,n}$ the conjuncts
\smallskip
\begin{multline}
\smash{\bigwedge_{\ell = 0}^{1}} \forall x_1 (\kint{1}(x_1) \rightarrow \\
\hspace{-4cm} \forall x_{2} \cdots \forall x_{\ell+2} (\kint{1}(x_{\ell +2}) \rightarrow\\
(\kdpred{1}{\ell}(x_1, \dots, x_{\ell+2}) \leftrightarrow \pi_{1,\ell}(x_1, \dots, x_{\ell+2})))),
\tag{$\Phi_{4}$}
\label{eq:phi4}
\end{multline}
securing the property $1$-predecessor, as required.

If, on the other hand,  $m=1$, we proceed in the same way, except that we add only the conjunct of~\eqref{eq:phi4} with index $\ell = 0$, i.e. the formula
\begin{multline}
\forall x_1 (\kint{1}(x_1) \rightarrow \forall x_{2} (\kint{1}(x_{2}) \rightarrow
(\kdpred{1}{0}(x_1, x_{2}) \leftrightarrow \pi_{1,0}(x_1, x_{2})))).
\tag{$\Phi_{4}'$}
\label{eq:phi4Prime}
\end{multline}
This suffices to satisfy the property 1-predecessor without resorting
to any predicates outside $\sigma_{1,n}$.

Finally, to secure $1$-covering, we add to $\Phi_{m,n}$ the conjuncts
\begin{align}
\tag{$\Phi_5$} & \exists x_1 (\kint{1}(x_1) \wedge \kzero{1}(x_1))
\label{eq:phi5}\\
\tag{$\Phi_6$} & \forall x_1 (\kint{1}(x_1) \rightarrow \exists x_2 (\kint{1}(x_2) \wedge \kdpred{1}{0}(x_1,x_2)))
\label{eq:phi6}.
\end{align}
Observe that~\eqref{eq:phi6} features only $\kdpred{1}{0}$, and not $\kdpred{1}{1}$, and so does not stray outside
$\sigma_{m,n}$, even when $m=1$.

\bigskip
\noindent
{\em Inductive case: } This case arises only if $m \geq2$.
Assume that, for some $k < m$,  $\val{k}: \kint{k}^\fA \rightarrow
[0,\tower{k}{n} -1]$  satisfies the properties of $k$-harmony,
$k$-zero, $k$-predecessor, $k$-covering and $k$-equality.
We show, by adding appropriate conjuncts to $\Phi_{m,n}$, that these properties hold with $k$ replaced by $k+1$.

For ($k+1$)-harmony, we add to $\Phi_{m,n}$ the following pair of conjuncts:
\begin{multline}
\forall x_1(\kint{k}(x_1) \rightarrow\\
\hspace{-5cm} \forall x_2(\kint{k+1}(x_2) \wedge \pm \kin{k}(x_1,x_2) \rightarrow \\
\forall x_3(\kint{k}(x_3) \myWedge \kdeq{k}{1}(x_1,x_2,x_3) \myRightarrow \pm \kout{k}(x_2, x_3)))).
\tag{$\Phi_{7}$}
\label{eq:phi7}
\end{multline}
If $a$, $a'$ are $k$-integers  such that $\val{k}(a) = \val{k}(a')$, and $b$ is any ($k+1$)-integer,  then, by $k$-equality (IH), $\fA \models \kdeq{k}{1}[a,b,a']$, whence~\eqref{eq:phi7} evidently secures
($k+1$)-harmony.

We remind ourselves at this point of the role of ($k+1$)-harmony in the subsequent argument, and, in particular,
on its relationship to $k$-covering. Let $b$ be a ($k+1$)-integer, and recall that $\val{k+1}(b)$ is defined by $b$'s satisfaction of the predicates $\kin{k}$ in relation to the various $k$-integers in $\fA$. By $k$-covering (IH), for all $i$ ($0 \leq i < \tower{k}{n}$), there {\em is} a $k$-integer $a$
with $\val{k}(a) = i$; and by ($k+1$)-harmony (just established), all such $k$-integers $a$ agree on
what the $i$th bit in $\val{k+1}(b)$ should be.

To secure ($k+1$)-zero, we add to $\Phi_{m,n}$ the conjunct
\begin{equation}
 \forall x_1 (\kint{k+1}(x_1) \rightarrow
 (\kzero{k+1}(x_1) \leftrightarrow
 \forall x_2 (\kint{k}(x_2) \rightarrow \neg \kout{k}(x_1,x_2)))).
\tag{$\Phi_{8}$}
\label{eq:phi8}
\end{equation}
From ($k+1$)-harmony and~\eqref{eq:phi8} we see that, for all ($k+1$)-integers $b$, $\fA \models \kzero{k+1}[b] \Leftrightarrow (\val{k+1}(b) = 0)$. For if there were any $k$-integer $a$, such that $\fA \models \kin{k}[a,b]$, then we would have
$\fA \models \kout{k}[b,a]$.

Establishing the property ($k+1$)-predecessor is more involved.
We add to $\sigma_{m,n}$ binary predicates $\kinStar{k}$, $\koutStar{k}$. The idea is that, for any $k$-integer $a$ and any ($k+1$)-integer $b$:
\begin{align}
\fA \models \kinStar{k}[a,b] & \Leftrightarrow \label{eq:instar}\\
 & \text{(for any $k$-integer $a'$,}
 \text{$\val{k}(a') < \val{k}(a) \Rightarrow \fA \not \models \kin{k}[a',b]$);}
\nonumber \\
\fA \models \kinStar{k}[a,b] & \Leftrightarrow \fA \models \koutStar{k}[b,a].
\label{eq:outstar}
\end{align}
Condition~\eqref{eq:instar} allows us to read $\kinStar{k}(x_1,x_2)$ as ``all the bits in the value of
the ($k+1$)-integer $x_2$ whose index is less than the value of
the $k$-integer $x_1$ are zero.'' Condition~\eqref{eq:outstar} is somewhat analogous to ($k+1$)-harmony.

Securing Condition~\eqref{eq:outstar} is easy. We add to $\Phi_{m,n}$ the pair of conjuncts
\begin{multline}
\forall x_1(\kint{k}(x_1) \rightarrow\\
\hspace{-5cm} \forall x_2(\kint{k+1}(x_2) \wedge \pm \kinStar{k}(x_1,x_2) \rightarrow \\
\forall x_3(\kint{k}(x_3) \myWedge \kdeq{k}{1}(x_1,x_2,x_3) \rightarrow
\pm \koutStar{k}(x_2, x_3)))). \hspace{1cm}
\label{eq:phi9}
\tag{$\Phi_{9}$}
\end{multline}
For let $a$ be a $k$-integer and $b$ a ($k+1$)-integer. By the property $k$-equality (IH),
$\fA \models \kdeq{k}{1}[a,b,a]$, whence~\eqref{eq:outstar} follows.

Securing Condition~\eqref{eq:instar} is harder. We first add to $\sigma_{m,n}$
a binary predicate $\kzero{k}^1$, which appends one semantically
inert argument to the unary predicate $\kzero{k}$. That is, we add to $\Phi_{m,n}$ the pair of conjuncts
\begin{equation}
\begin{split}
& \forall x_1 (\kint{k}(x_1) \hspace{-0.5mm} \wedge \hspace{-0.5mm}\pm  \kzero{k}(x_1) \myRightarrow
\forall x_2 \hspace{-0.5mm} \pm \hspace{-0.5mm} \kzero{k}^1(x_1,x_2)).\hspace{1cm}
\end{split} \hspace{-0.75cm}
\tag{$\Phi_{10}$}
\label{eq:phi10}
\end{equation}
We can then secure ~\eqref{eq:instar} by adding to $\Phi_{m,n}$ the conjunct
\begin{multline}
\forall x_1 ( \kint{k}(x_1) \rightarrow\\
\hspace{-2cm} \forall x_2 ( \kint{k+1}(x_2) \rightarrow
(\kinStar{k}(x_1, x_2) \leftrightarrow
(\kzero{k}^1(x_1,x_2) \vee \\
\forall x_3(\kint{k}(x_3) \wedge \kdpred{k}{1}(x_1, x_2, x_3) \rightarrow\\
(\koutStar{k}(x_2, x_3) \wedge \neg \kout{k}(x_2, x_3))))))). \hspace{1cm}  
\label{eq:phi11}
\tag{$\Phi_{11}$}
\end{multline}

\begin{figure}
\begin{center}
\resizebox{!}{2.8cm}{
\begin{tikzpicture}
\draw[->] (0,0) -- node[left]{$\kinStar{k}$} (0.93,0.93);
\draw[->] (1,0.75) -- node[below]{$\neg \kout{k}, \ \ \koutStar{k}$} (3.4,0.75);
\draw (1,0.65) -- (1,0.85);
\draw[dotted] (1,1) -- (3.5,2.5) -- (3.5, 1.5) -- cycle;
\draw (2.6,1.8) node[below]{$\neg \kout{k}$};

\draw (5,2.25) node[below]{$\val{k}(a') < \val{k}(a^*)$};
\draw (5.5,1.25) node[below]{$\val{k}(a^*) = \val{k}(a)-1$};

\draw[->] (0.39,-0.39) -- (1,0.2) -- node[below]{$\kdpred{k}{1}$}(3.5, 0.2);
\draw[thin] (0.32,-0.32) -- (0.46,-0.46);

\draw (-0.25,0) node {$a$};
\draw (1,1.3) node {$b$};
\draw (3.6,0.7) node {$a^*$};

\filldraw[fill=black!30] (0,0) circle (0.1);
\filldraw[fill=black] (1,1) circle (0.1);
\filldraw[fill=black!30] (3.5,1) circle (0.1);
\filldraw[fill=black!30] (3.5,1.5) circle (0.1);
\filldraw[fill=black!30] (3.5,2.5) circle (0.1);
\end{tikzpicture}
}
\end{center}
\caption{Fixing the interpretation of $\kinStar{k}$.}
\label{fig:kinstar}
\end{figure}

To see this, we perform a subsidiary
induction on the quantity $\val{k}(a)$.
Let $a$ be any $k$-integer and $b$ any ($k+1$)-integer.
For the base case, suppose $\val{k}(a) =0$. Then $\fA \models \kzero{k}[a]$ by the property
$k$-zero (IH), whence
$\fA \models \kzero{k}^1[a,b]$ by~\eqref{eq:phi10}, whence $\fA \models \kinStar{k}[a,b]$ by~\eqref{eq:phi11}.
For the inductive step, suppose that $\val{k}(a) > 0$; thus, by
$k$-zero again,
$\fA \not \models \kzero{k}[a]$.
Assume first that $\fA \models \kinStar{k}[a,b]$.
By $k$-covering (IH), we may pick some $k$-integer $a^*$ with $\val{k}(a^*) = \val{k}(a)-1$.
By $k$-predecessor (IH), setting $\ell = 1$, $\fA \models \kdpred{k}{1}[a,b,a^*]$, whence, taking $x_1$, $x_2$ and $x_3$
in~\eqref{eq:phi11} to be $a$, $b$ and $a^*$, respectively,
$\fA \models \koutStar{k}[b, a^*]$ and $\fA \not \models \kout{k}[b,a^*]$.
The situation is illustrated in Fig.~\ref{fig:kinstar}. Applying the subsidiary inductive hypothesis, it follows from~\eqref{eq:instar} and~\eqref{eq:outstar}, with $a$ replaced by $a^*$, that
for any $k$-integer $a'$ with $\val{k}(a') < \val{k}(a^*)$, $\fA \not \models \kin{k}[a',b]$.
Moreover, by ($k+1$)-harmony (just established), $\fA \not \models  \kout{k}[b,a^*]$
implies that,
for any $k$-integer $a'$ with $\val{k}(a') = \val{k}(a^*)$,  $\fA \not \models \kin{k}[a',b]$.
Thus, for any $k$-integer $a'$,
$\val{k}(a') < \val{k}(a) \Rightarrow \fA \not \models \kin{k}[a',b]$.
Conversely, suppose that $\fA \not \models \kinStar{k}[a,b]$.
Then, from~\eqref{eq:phi11}, there exists some  $k$-integer
$a^*$ such that $\kdpred{k}{1}[a,b,a^*]$, but either
$\fA \not \models \koutStar{k}[b, a^*]$ or $\fA \models \kout{k}[b,a^*]$.
By $k$-predecessor (IH), again setting $\ell =1$, $\val{k}(a^*) = \val{k}(a)-1$, and hence,
applying the subsidiary inductive hypothesis, \eqref{eq:instar} and~\eqref{eq:outstar} ensure that,
if $\fA \not \models \koutStar{k}[b, a^*]$,
then, for some $k$-integer $a'$ with $\val{k}(a') < \val{k}(a^*) < \val{k}(a)$, $\fA \models \kin{k}[a',b]$.
On the other hand, by~$k$-equality and ($k+1$)-harmony,  $\fA \models \kout{k}[b,a^*]$ implies
$\fA \models \kin{k}[a^*,b]$. Either way, there exists a $k$-integer $a'$ such that
$\val{k}(a') < \val{k}(a)$, but $\fA \models \kin{k}[a',b]$.
This completes the (subsidiary) induction, and establishes~\eqref{eq:instar}.

Having fixed the interpretation of $\kinStar{k}$, we proceed to secure the property \linebreak
($k+1$)-predecessor.
Assume first that $k+1 < m$. We add to $\sigma_{m,n}$
the predicates $\kin{k}^{2}$, $\kin{k}^{3}$, $\kinStar{k}^{2}$, $\kinStar{k}^{3}$ and we add to $\Phi_{m,n}$ the conjuncts

\begin{multline}
\smash{\bigwedge_{\ell=0}^{1}} \forall x_1 (\kint{k}(x_1) \rightarrow\\
\hspace{-3cm}
\forall x_2 (\kint{k+1}(x_2) \wedge \pm \kin{k}(x_1,x_2) \rightarrow\\
\forall x_3 \cdots \forall x_{\ell+4} \pm \kin{k}^{\ell+2}(x_1, \dots, x_{\ell+4})))
\tag{$\Phi_{12}$}
\label{eq:phi12}
\end{multline}
\begin{multline}
\smash{ \bigwedge_{\ell=0}^{1}} \forall x_1 (\kint{k}(x_1) \rightarrow\\
\hspace{-3cm}
\forall x_2 (\kint{k+1}(x_2) \wedge \pm \kinStar{k}(x_1,x_2) \rightarrow\\
\forall x_3 \cdots \forall x_{\ell+4} \pm \kinStar{k}^{\ell+2}(x_1, \dots, x_{\ell+4})),
\tag{$\Phi_{13}$}
\label{eq:phi13}
\end{multline}
fixing these predicates to be the result of adding either 2 or 3 semantically inert arguments to $\kin{k}$ and $\kinStar{k}$, as indicated by the superscripts.

Now add to $\sigma_{m,n}$ the ternary predicate $\kdpredDig{k+1}{0}$ and quaternary
predicate $\kdpredDig{k+1}{1}$ and let $\rho_{k,\ell}(x_1, \dots, x_{\ell+4})$ abbreviate the formula
\begin{equation*}
\begin{split}
& [\kinStar{k}^{\ell+2}(x_1, \dots, x_{\ell+4}) \rightarrow
 (\kin{k}^{\ell+2}(x_1, \dots, x_{\ell+4}) \leftrightarrow \neg
\kout{k}(x_{\ell+3}, x_{\ell+4}))]
\ \wedge \\
& [\neg \kinStar{k}^{\ell+2}(x_1, \dots, x_{\ell+4}) \rightarrow
 (\kin{k}^{\ell+2}(x_1, \dots, x_{\ell+4}) \leftrightarrow
  \kout{k}(x_{\ell+3}, x_{\ell+4}))],
\end{split}
\end{equation*}
where $0 \leq \ell \leq 1$. Suppose that $a$, $a'$ are $k$-integers, $b$, $b'$ ($k+1$)-integers and $\bar{c}$ an $\ell$-tuple of elements. From~\eqref{eq:phi12}, $\fA \models \kin{k}^{\ell+4}[a,b,\bar{c},b',a']  \Leftrightarrow \fA \models \kin{k}[a,b]$, and
from~\eqref{eq:phi13},
$\fA \models \kinStar{k}^{\ell+2}[a,b,\bar{c},b',a']  \Leftrightarrow \fA \models \kinStar{k}[a,b]$.
Furthermore, by ($k+1$)-harmony,
$\fA \models \kout{k}[b',a']  \Leftrightarrow \fA \models \kin{k}[a',b']$.
Hence,
from our preliminary remarks on the canonical representations of numbers by bit-strings,
$\fA \models \rho_{k,\ell}[a,b,\bar{c},b',a']$ just in case the $\val{k}(a')$-th digit in the encoding of $\val{k+1}(b')$ is
the same as the $\val{k}(a)$-th digit in the encoding of $\val{k+1}(b) -1$,
modulo $\tower{k+1}{n}$.

We then add to $\Phi_{m,n}$ the conjunct
\begin{multline}
\smash{\bigwedge_{\ell=0}^1}\forall x_1 (\kint{k}(x_1) \rightarrow\\
\hspace{-5.7cm}
\forall x_2 (\kint{k+1}(x_2) \rightarrow\\
\hspace{-3.6cm} \forall x_{3} \cdots \forall x_{\ell+3} (\kint{k+1}(x_{\ell+3}) \rightarrow \\
\hspace{-0.5cm}
\forall x_{\ell+4} (\kint{k}(x_{\ell+4}) \wedge \kdeq{k}{\ell+2}(x_1,\dots,x_{\ell+4}) \rightarrow \\
(\kdpredDig{k+1}{\ell}(x_2, \dots, x_{\ell+4}) \leftrightarrow
\rho_{k,\ell}(x_1, \dots, x_{\ell+4}))))).
\tag{$\Phi_{14}$}
\label{eq:phi14}
\end{multline}
This formula is illustrated in the left-hand diagram of Fig.~\ref{fig:eqDig} in the case $\ell =1$: here,
 $\rho_{k,1}$ holds of the tuple $a,b,c,b',a$, and $\kdpredDig{k+1}{1}$ of the tuple $b,c,b',a$,
 just in case the $\val{k}(a)$th digit of $\val{k+1}(b')$ agrees with the $\val{k}(a)$th digit of $\val{k+1}(b)-1$. (Note that the single element
$a$ is depicted twice in this diagram.)
Suppose $a$ is a $k$-integer, $b$, $b'$ are ($k+1$)-integers in $\fA$,
and $\bar{c}$ is any $\ell$-tuple from $A$ with $0 \leq \ell \leq 1$.
By $k$-equality (IH), $\fA \models \kdeq{k}{\ell+2}[a,b,\bar{c}, b',a]$, and from the properties
of $\rho_{k,\ell}$ just established (setting $a'= a$),
$\fA \models \kdpredDig{k+1}{\ell}[b,\bar{c},b',a]$ just in case
the $\val{k}(a)$th digit of $\val{k+1}(b')$ is equal to the $\val{k}(a)$th digit of
$\val{k+1}(b) -1$, modulo $\tower{k+1}{n}$.

To establish ($k+1$)-predecessor, therefore, we add to $\Phi_{m,n}$ the conjuncts
\begin{multline}
\smash{\bigwedge_{\ell=0}^1}\forall x_1 (\kint{k+1}(x_1) \rightarrow
\forall x_2 \cdots \forall x_{\ell+2} (\kint{k+1}(x_{\ell+2}) \rightarrow\\
\hspace{-3cm} (\kdpred{k+1}{\ell}(x_1, \dots, x_{\ell+2}) \leftrightarrow\\ 
\forall x_{\ell+3} (\kint{k}(x_{\ell+3}) \rightarrow
\kdpredDig{k+1}{\ell}(x_1, \dots, x_{\ell+3}))))). \hspace{0.4cm}
\tag{$\Phi_{15}$}
\label{eq:phi15}
\end{multline}
From~\eqref{eq:phi15},
$\fA \models \kdpred{k+1}{\ell}[b,\bar{c},b']$ just in case
each digit of $\val{k+1}(b')$ is equal to the corresponding digit of
$\val{k+1}(b) -1$, modulo $\tower{k+1}{n}$.

If, on the other hand, $k+1 = m$, we proceed as above, but we add to $\sigma_{m,n}$ only the predicates $\kin{k}^2$, $\kinStar{k}^2$, $\kdpredDig{k+1}{0}$ (not $\kin{k}^3$, $\kinStar{k}^3$ or $\kdpredDig{k+1}{1}$), and we add to $\Phi_{m,n}$ only those conjuncts
of~\eqref{eq:phi12}--\eqref{eq:phi15} with $\ell=0$ (not with $\ell=1$). This suffices for ($k+1$)-predecessor in the case $k+1 = m$, and does
not require the use of any predicates outside $\sigma_{m,n}$. Note in particular that the conjunct
$\ell = 1$ in~\eqref{eq:phi12}--\eqref{eq:phi14} requires predicates of arity 5, which are not available when $m=2$. 

To establish the property
($k+1$)-covering, we add to $\Phi_{m,n}$ the conjuncts
\begin{align}
& \exists x_1 (\kint{k+1}(x_1) \wedge \kzero{k+1}(x_1))
\tag{$\Phi_{16}$}
\label{eq:phi16}\\
& \forall x_1 (\kint{k+1}(x_1) \rightarrow
 \exists x_2 (\kint{k+1}(x_2) \wedge \kdpred{k+1}{0}(x_1,x_2))).
\tag{$\Phi_{17}$}
\label{eq:phi17}
\end{align}
Note that~\eqref{eq:phi17} features only $\kdpred{k+1}{0}$, and not $\kdpred{k+1}{1}$, so it is defined even when $k+1=m$

It remains only to establish ($k+1$)-equality.
Conceptually, this is rather easier than ($k+1$)-predecessor; however, we do need to consider larger  numbers of semantically inert variables. Let $L = 2(m-k-1)$. The property ($k+1$)-equality concerns the interpretation of the ($\ell+2$)-ary predicate $\kdeq{k+1}{\ell}$ 
for all $\ell$ ($0 \leq \ell \leq L$). Observe that, if $k=1$ (first inductive step), then $L=2m-4$, and if
$k= m-1$ (last inductive step), then $L=0$. Thus, in the sequel, we always have $L \leq 2m-4$. (Remember that the inductive case
is encountered only if $m \geq 2$.)
To ease the pain of reading, we split the task into three stages.

For the
first stage, for all $\ell$ ($0  \leq \ell \leq L$), add to $\sigma_{m,n}$
an ($\ell+2$)-ary predicate $\kin{k}^\ell$, and add to $\Phi_{m,n}$ the conjuncts
\begin{multline}
\smash{\bigwedge_{\ell=0}^{L}}\forall x_1 (\kint{k}(x_1) \rightarrow \\
\hspace{-3cm}
\forall x_2 (\kint{k+1}(x_2) \wedge \pm \kin{k}(x_1,x_2) \rightarrow\\
\forall x_3 \cdots \forall x_{\ell+2} \pm \kin{k}^{\ell}(x_1, x_2, \dots, x_{\ell+1}, x_{\ell+2}))),
\tag{$\Phi_{18}$}
\label{eq:phi18}
\end{multline}
thus fixing $\kin{k}^\ell$ to be the result of adding $\ell$  semantically inert arguments to  $\kin{k}^\ell$.
(For $\ell \leq 3$, this repeats the work of~\eqref{eq:phi12}, but no matter.)

In the second stage, for all $\ell$ ($0  \leq \ell \leq L$), add to $\sigma_{m,n}$ an
($\ell+3$)-ary predicate $\kdeqDig{k+1}{\ell}$, and add to $\Phi_{m,n}$ the conjuncts

\smallskip

\begin{multline}
\smash{\bigwedge_{\ell=0}^{L}}
\forall x_1 (\kint{k}(x_1) \rightarrow
\forall x_2 (\kint{k+1}(x_{2}) \rightarrow\\
\hspace{-3.5cm}
\forall x_3 \cdots \forall x_{\ell+3} (\kint{k+1}(x_{\ell+3}) \rightarrow \\
\hspace{-0cm}
\forall x_{\ell+4} (\kint{k}(x_{\ell+4}) \wedge \kdeq{k}{\ell+2}(x_1, \dots, x_{\ell+4}) \rightarrow\\
(\kdeqDig{k+1}{\ell}(x_2, \ldots, x_{\ell+4})  \leftrightarrow
\eta_{k,\ell+2}(x_1, \dots, x_{\ell+4})))))),
\tag{$\Phi_{19}$}
\label{eq:phi19}
\end{multline}
where $\eta_{k,\ell+2}(x_1, \dots, x_{\ell+4})$ is the formula: $\kin{k}^{\ell+2}(x_1, \dots, x_{\ell+4}) \leftrightarrow \kout{k}(x_{\ell+3},x_{\ell+4})$.

Let $b$, $b'$ be ($k+1$)-integers in $\fA$, $a$ a $k$-integer in $\fA$, and $\bar{c}$ any $\ell$-tuple
from $A$.
We claim that $\fA \models \kdeqDig{k+1}{\ell}[b, \bar{c}, b',a]$ just in case
$\val{k+1}(b)$ and $\val{k+1}(b')$ agree on their $\val{k}(a)$th bit.
For, by $k$-equality (IH),
$\fA \models \kdeq{k}{\ell+2}[a,b, \bar{c}, b',a]$.
Hence, by~\eqref{eq:phi19}
$\fA \models \kdeqDig{k+1}{\ell}[b, \bar{c}, b',a]$ holds just in case
$\fA \models \eta_{k,\ell+2}[a, b, \bar{c}, b', a]$.
But, by~\eqref{eq:phi18},
$\fA \models \kin{k}^{\ell+2}[a, b, \bar{c}, b', a]$ if and only if
$\fA \models \kin{k}[a, b]$, i.e.~if and only if the $\val{k}(a)$th bit of $\val{k+1}(b)$ is 1.
That is,
$\fA \models \kdeqDig{k+1}{\ell}[b, \bar{c}, b',a]$ is equivalent to the statement that
$\fA \models \kin{k}[a,b]$ if and only if $\fA \models \kout{k}[b', a]$.
The situation is illustrated (for the case where
$\fA \models \kdeqDig{k+1}{\ell}[b, \bar{c}, b',a]$ holds) in the right-hand
diagram of Fig.~\ref{fig:eqDig}, where all polarity alternatives
$\pm$ are assumed to be resolved in the same way.
But, by ($k+1$)-harmony, $\fA \models \kout{k}[b', a]$ if and only if
$\fA \models \kin{k}[a, b']$.
This establishes the claim.
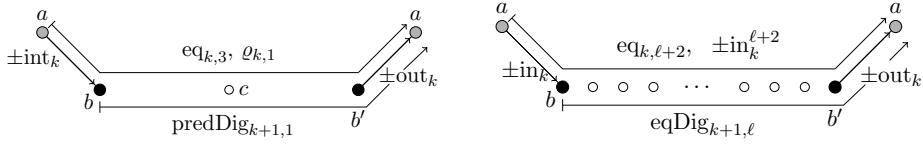
\begin{figure}
\begin{center}
\resizebox{!}{1.9cm}{
	\begin{tikzpicture}
	\draw (0.5, 2.8) node  {$a$};
	\draw (0.5,2.5) -- (1.5,1.5);
	\draw[->] (0.5,2.5) -- node[left] {$\pm \kint{k}$} (1.4,1.6);
	
	\draw (7, 2.8) node  {$a$};
	\draw (6,1.5) -- (7,2.5);
	\draw[->] (6,1.5) -- node[right, near start] {$\pm \kout{k}$} (6.9,2.4);
	
	\draw (1.3, 1.3) node  {$b$};
	\draw[] (0.66,2.64) -- (1.5,1.8) -- node[above] {$\kdeq{k}{3}$, $\rho_{k,1}$} (6, 1.8);
	\draw[->] (6, 1.8) -- (6.84, 2.64);
	\draw[thin] (0.6,2.58) -- (0.72,2.70);
	
	\draw (4, 1.5) node  {$c$};
	
	\draw (6, 1) node  {$b'$};
	\draw[] (1.5,1.2) -- node[below] {$\kdpredDig{k+1}{1}$} (6.14, 1.2) -- (6.54, 1.6);
	\draw[->] (6.79, 1.85) --  (7.20, 2.24);
	\draw[thin] (1.5,1.1) -- (1.5,1.3);
	
	\filldraw[fill=black!30] (0.5,2.5) circle (0.1);
	\filldraw[fill=black] (1.5,1.5) circle (0.1);
	
	\filldraw[fill=white] (3.75,1.5) circle (0.075);
	
	\filldraw[fill=black] (6,1.5) circle (0.1);
	\filldraw[fill=black!30] (7,2.5) circle (0.1);
	\end{tikzpicture}}
\hspace{2mm}
\resizebox{!}{2cm}{
\begin{tikzpicture}
\draw (0.5, 2.8) node  {$a$};
\draw (0.5,2.5) -- (1.5,1.5);
\draw[->] (0.5,2.5) -- node[below] {\hspace{-2.5mm} $\pm \kin{k}$} (1.4,1.6);

\draw (7, 2.8) node  {$a$};
\draw (6,1.5) -- (7,2.5);
\draw[->] (6,1.5) -- node[below right] {\hspace{-3mm} $\pm \kout{k}$} (6.9,2.4);

\draw (1.3, 1.3) node  {$b$};
\draw[] (0.66,2.64) -- (1.5,1.8) -- node[above] {$\kdeq{k}{\ell+2},\ \ \pm \kin{k}^{\ell+2}$} (6, 1.8);
\draw[->] (6, 1.8) -- (6.84, 2.64);
\draw[thin] (0.6,2.58) -- (0.72,2.70);

\draw (6, 1) node  {$b'$};
\draw[] (1.5,1.2) -- node[below] {$\kdeqDig{k+1}{\ell}$} (6.14, 1.2) -- (6.5, 1.54);
\draw[->] (6.9, 1.94) -- (7.20, 2.24);
\draw[thin] (1.5,1.1) -- (1.5,1.3);
\draw (3.75, 1.5) node  {$\dots$};

\filldraw[fill=black!30] (0.5,2.5) circle (0.1);
\filldraw[fill=black] (1.5,1.5) circle (0.1);

\filldraw[fill=white] (2,1.5) circle (0.075);
\filldraw[fill=white] (2.5,1.5) circle (0.075);
\filldraw[fill=white] (3,1.5) circle (0.075);
\filldraw[fill=white] (4.5,1.5) circle (0.075);
\filldraw[fill=white] (5,1.5) circle (0.075);
\filldraw[fill=white] (5.5,1.5) circle (0.075);

\filldraw[fill=black] (6,1.5) circle (0.1);
\filldraw[fill=black!30] (7,2.5) circle (0.1);
\end{tikzpicture}
}
\end{center}
\caption{Fixing the interpretations of
	 $\kdpredDig{k+1}{1}$ (left) and $\kdeqDig{k+1}{\ell}$ (right).}
\label{fig:eqDig}
\end{figure}

In the third stage, we add to $\Phi_{m,n}$ the conjunct
\begin{multline}
\smash{\bigwedge_{\ell=0}^{L}} \forall x_1 (\kint{k+1}(x_1) \rightarrow\\
\hspace{-2cm} \forall x_2 \cdots \forall x_{\ell+2} (\kint{k+1}(x_{\ell+2}) \rightarrow \\
\hspace{-1cm}
(\kdeq{k+1}{\ell}(x_1, \dots, x_{\ell+2}) \leftrightarrow\\
\forall x_{\ell+3} (\kint{k}(x_{\ell+3}) \rightarrow \kdeqDig{k+1}{\ell}(x_1, \dots, x_{\ell+3}))))).
\tag{$\Phi_{20}$}
\label{eq:phi20}
\end{multline}
Given the properties of $\kdeqDig{k+1}{\ell}$ just established, this evidently secures ($k+1$)-equality, completing the induction.

\bigskip
We have remarked that, by $m$-covering, any model of $\Phi_{m,n}$ has cardinality at least $\tower{m}{n}$.
We claim that $\Phi_{m,n}$ is satisfiable.  Let
$A = A_1 \dot{\cup} \cdots \dot{\cup} A_m$, where $A_k = \{ \langle k, i\rangle\mid 0\leq i< \tower{k}{n}\} \rangle$. (That is, $A$ is the
disjoint union of the various sets of integers $[0,\tower{k}{n}-1]$.) Let $\kint{k}^\fA = A_k$ for all $k$ ($1 \leq k \leq m$), and interpret the
other predicates of $\sigma_{m,n}$ as described above. It is easily verified that $\fA \models \Phi_{m,n}$.

It remains only to check the number of variables featured in $\Phi_{m,n}$.
Consider first the conjuncts introduced in the base case. By inspection, \eqref{eq:phi1}--\eqref{eq:phi3} and~\eqref{eq:phi5}--\eqref{eq:phi6}
are in $\FL^{2m}$. For $m >1$,~\eqref{eq:phi4} is in $\FL^{3}$; but if $m=1$,
only the conjunct with  index $\ell = 0$ is present, which is in $\FL^2$. Either way, \eqref{eq:phi1}--~\eqref{eq:phi6} are in $\FL^{2m}$.
Consider now the conjuncts introduced in the inductive case. By inspection,
these feature only $\max(5,2m) \leq 2m$ variables. If, however,
$m=2$, then the inductive step only runs once, with $k+1 = m$, in which case
only those conjuncts of~\eqref{eq:phi12}--\eqref{eq:phi15} occur for which $\ell=0$,
which feature only 4 variables.
Either way, \eqref{eq:phi7}--\eqref{eq:phi20} are in $\FL^{2m}$.
\end{proof}

\begin{theorem}\label{thm:lower}
The satisfiability problem for $\FL^{2m}$ is $m$-\NExpTime-hard.\newline
Hence, the satisfiability problem for $\FL$ is non-elementary.
\end{theorem}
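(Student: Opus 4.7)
The plan is to reduce the acceptance problem for an arbitrary nondeterministic Turing machine $M$ running in time $\tower{m}{n}$ to satisfiability of an $\FL^{2m}$-formula $\Psi_{M,w}$ of size polynomial in $n$, $m$, and $|M|$. Setting $T = \tower{m}{n}$, an accepting computation of $M$ on an input $w$ of length $n$ can be coded as a $T\times T$ tableau whose $(t,p)$-cell records the tape symbol at position $p$ at time $t$, together with the state of $M$ in case the head is at $p$ at time $t$. The task is to express the existence of such an accepting tableau inside $\FL^{2m}$.

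At the heart of the reduction lies the machinery from the proof of Theorem~\ref{thm:bigModels}. I plan to instantiate $\Phi_{m,n}$ twice, over disjoint copies of its auxiliary vocabulary, thereby producing two independent hierarchies of $k$-integers ($1 \le k \le m$): a row-hierarchy used to index time and a column-hierarchy used to index tape positions. The covering and harmony properties at level $m$ guarantee that each hierarchy supplies a full set of representatives for $[0,T-1]$ and that any two representatives of the same value may be freely interchanged. All of the bookkeeping predicates ($\kzero{k}$, $\kdeq{k}{\ell}$, $\kdpred{k}{\ell}$ and so on) remain available on each side, with arities bounded by $2m$.

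Each cell of the tableau is then identified with a pair (row-$m$-integer, column-$m$-integer). For each tape symbol $\sigma$ and each state $q$ of $M$, introduce binary ``content'' predicates $\mbox{sym}_\sigma(x_1,x_2)$ and $\mbox{head}_q(x_1,x_2)$. Conjuncts asserting that each cell holds exactly one tape symbol, that at most one cell per row is marked with a state, that the first row (selected via $\kzero{m}$ on the time coordinate) encodes the input $w$ in its initial bits, and that some cell is tagged with the accepting state, are straightforward fluted formulas involving only $x_1$ and $x_2$.

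The main obstacle is encoding the transition relation: the content of $(t{+}1,p)$ is determined by the contents of $(t,p{-}1)$, $(t,p)$, $(t,p{+}1)$. Written naively, this would refer to the content predicates on non-contiguous tuples of variables, violating the fluted restriction. I plan to resolve it using precisely the decorator-and-harmony technique already developed in Theorem~\ref{thm:bigModels}: for each content $\tau$, introduce a family of padded predicates $\mbox{cnt}_\tau^{\ell}(x_1,\dots,x_{\ell+2})$ whose additional arguments are semantically inert, together with conjuncts forcing $\mbox{cnt}_\tau^{\ell}$ to be a propagated copy of the base binary content predicate. By combining these padded predicates with the row- and column-predecessor predicates $\kdpred{m}{0}$ and $\kdpred{m}{1}$ and with the equality predicates $\kdeq{k}{\ell}$ (which allow us to identify representatives of the same $m$-value), every local transition constraint can be expressed as a fluted sentence whose quantifier chain $\forall x_1 \forall x_2 \cdots$ has length at most $2m$. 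The resulting $\Psi_{M,w}$, conjoined with the two instantiations of $\Phi_{m,n}$, is then satisfiable iff $M$ accepts $w$, yielding $m$-\NExpTime-hardness of $\FL^{2m}$-satisfiability. The non-elementary hardness of $\FL$ is immediate: $\FL \supseteq \FL^{2m}$ for every $m$, while no elementary function dominates $m \mapsto \tower{m}{n}$.
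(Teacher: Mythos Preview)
Your plan has a genuine gap at the top of the counting hierarchy. You propose to use the predicates $\kdpred{m}{0}$ and $\kdpred{m}{1}$ (and, implicitly, $\kdeq{m}{\ell}$ for $\ell\geq 1$) to compare and relate $m$-integers sitting at non-adjacent variable positions. But these do not exist in the construction of Theorem~\ref{thm:bigModels}, and they cannot be added within $\FL^{2m}$: building $\kdeq{k}{\ell}$ via the $\Phi_{19}$/$\Phi_{20}$ mechanism requires $\kdeq{k-1}{\ell+2}$, so inductively $\kdeq{k}{\ell}$ costs $\ell+2k$ variables; at $k=m$ this forces $\ell=0$, and the same cascade kills $\kdpred{m}{1}$. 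Once cells are \emph{pairs} of $m$-integers, even establishing harmony for your new content predicates $\mbox{sym}_\sigma(x_1,x_2)$ (which is not inherited from $m$-harmony---that speaks only about $\kin{m-1}$) already needs to compare two $m$-integers separated by an intervening variable, and the transition constraints need still more. The ``padding'' trick only works when the underlying equality/predecessor predicates with the required number of inert arguments are available; at level $m$ they are not.

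The paper avoids this by changing the top-level encoding rather than duplicating the hierarchy. It builds the $k$-integer apparatus only up to level $m-1$ (where $\kdeq{m-1}{\ell}$ and $\kdpred{m-1}{\ell}$ for $\ell\leq 2$ and $\ell\leq 1$ respectively are still available within $2m$ variables), and then represents each grid cell by a \emph{single} element, a ``vertex'', whose two coordinates in $[0,\tower{m}{n}-1]$ are encoded by the vertex's $\kin{X}$- and $\kin{Y}$-relationships to the $(m{-}1)$-integers, exactly as a $(k{+}1)$-integer's value is encoded via $\kin{k}$. Harmony, zero, equality and predecessor for the $X$- and $Y$-coordinates are then obtained by re-running the last inductive step twice, using the level-$(m{-}1)$ predicates that do exist; the horizontal and vertical adjacency constraints of a $\tower{m}{n}\times\tower{m}{n}$ tiling (the paper uses tiling rather than Turing machines, a cosmetic difference) become binary comparisons between vertices. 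In short: do not pair two $m$-integers; make one object carry two $m$-sized coordinates, so that all cross-argument comparisons drop to level $m-1$, where the variable budget still has the slack you need.
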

\begin{proof}
A {\em tiling system} is a quadruple $(C,c_0,H,V)$, where
$C$ is a non-empty, finite set, $c_0$ an element of $C$, and $H$, $V$  subsets of $C \times C$. 
For any $N \geq 2$, an $N \times N$ \textit{tiling} for $(C,c_0,V,H)$ is a function $f: [0, N-1] \times [0, N-1] \rightarrow C$ such that $f(0,0) = c_0$ and, for all $i$, $j$ ($0 \leq i,j < N$), $\langle f(i,j), f(i+1,j)\rangle \in H$ and $\langle f(i,j), f(i,j+1)\rangle \in V$, where arithmetic in subscripts is interpreted modulo $N$. Intuitively, we are to imagine an $N \times N$ grid (with toroidal wrap-around) whose squares are tiled with the tiles of the colours in $C$: the `bottom left' square is to be coloured $c_0$, the relation $H$ specifies which colours are allowed to go immediately `to the right of' which other colours, and the relation $V$ specifies which colours are allowed to go immediately `above' which other colours. For a fixed,
non-negative integer $m$, the $\tower{m}{n}$-sized tiling problem $(C,c_0,H,V)$ asks whether, for a positive
integer $n$, there exists a $\tower{m}{n} \times \tower{m}{n}$ tiling for $(C,c_0,V,H)$. It is well-known that,
for any positive $m$,
there exist tiling systems $(C,c_0,V,H)$ whose $\tower{m}{n}$-sized tiling problem is $m$-\NExpTime-hard.

Fixing some positive $m$, and given any tiling system $(C,c_0,V,H)$, we shall construct,
in time bounded by a
polynomial function of $n$, a formula $\Psi_n$ of $\FL^{2m}$ such that $\Psi_n$ is satisfiable if and only
if there exists a $\tower{m}{n} \times \tower{m}{n}$ tiling for $(C,c_0,V,H)$. It follows that 
the satisfiability problem for $\FL^{2m}$ is $m$-\NExpTime-hard.

The proof in the case $m=1$ is an easy adaptation of the standard proof of the \NExpTime-hardness proof for the two-variable fragment of first-order logic, and need not be rehearsed here; see,~e.g.
B\"{o}rger, Gr\"{a}del and Gurevich~\cite[pp.~253 ff.]{purdy:BGG97}. Hence, we may assume that $m >1$. We proceed  as with the construction of $\Phi_{m,n}$ in the proof of Theorem~\ref{thm:bigModels}, except that we slightly modify the final inductive step.
Specifically, we begin by supposing $\Psi_n$ to have all those conjuncts of $\Phi_{m,n}$ required to establish 
the existence of $k$-integers, with a valuation
function $\val{k}$ satisfying the properties $k$-harmony, $k$-zero, $k$-predecessor, $k$-covering, $k$-equality,
for all values of $k$ from 1 to $m-1$. We do not construct $m$-integers, but instead
objects we refer to as {\em vertices}, which are, in effect, pairs of $m$-integers. We establish the requisite properties
of vertices  by adding to
$\Psi_n$ further conjuncts as described below. 

Let $\vertex$ be a unary predicate, and $\kin{X}$, $\kin{Y}$, $\kout{X}$, $\kout{Y}$, binary predicates. 
If a structure $\fA$ is clear from
context, we call any element of $A$ satisfying $\vertex$ a {\em vertex}. Define
the function $\val{X}: \vertex^\fA \rightarrow [0,\tower{m}{n}-1]$ by setting $\val{X}(b)$,
for any vertex $b$, to be
the integer coded by the bit-string $s_{N-1}, \dots, s_0$ of length $N = \tower{m-1}{n}$
where, for all $i$ ($0 \leq i < N$),
\begin{equation*}
s_i =
\begin{cases}
1 \text{ if $\fA \models \kin{X}[a,b]$ for some  ($m-1$)-integer $a$ such that $\val{m-1}(a) = i$;}\\
0 \text{ otherwise.}
\end{cases}
\end{equation*}
Thus, we use the ($m-1$)-integers to represent positions in the 
horizontal coordinate of any vertex, employing the binary predicate $\kin{X}$ to encode
the bits at the positions in question. 
Similarly, we may use the ($m-1$)-integers to represent positions in the 
vertical coordinate of any vertex, employing the binary predicate $\kin{Y}$ to encode
the bits at the positions in question. Here again, we implicitly rely on
($m-1$)-covering to show that there {\em is} a ($m-1$)-integer having any value $i$ in the range
$[0,\tower{m-1}{n}-1]$. In the sequel, 
we establish  harmony-like properties showing that it does not matter \textit{which} such ($m-1$)-integer we choose.

We write $\bar{X} = Y$ and $\bar{Y} = X$, and we use the symbol
$D$ to stand for either of the letters $X$ or $Y$.
We shall ensure that any model of $\Psi_n$ has the following properties, for $D \in \{X,Y\}$.
\begin{description}
\item[$D$-{harmony}] For all vertices $b$ and all ($m-1$)-integers $a$, $a'$ 
 in $\fA$ such that $\val{m-1}(a)= \val{m-1}(a')$, 
\mbox{$\fA \models \kin{D}[a,b] \Leftrightarrow \fA \models \kin{D}[a',b]$}.
\item[$D$-{zero}] For all vertices $b$  in $\fA$,
$\fA \models \kzero{D}[a] \Leftrightarrow (\val{D}(a) = \val{D}(a) = 0)$.
\item[$D$-{equality}] 
For all vertices $a$, $c$ in $\fA$,
$\fA \models \keq{D}[a,c] \Leftrightarrow \val{D}(c) = \val{D}(a)$.
\item[$D$-{predecessor}] For all vertices $a$, $c$ in $\fA$,
$\fA \models \kpred{D}[a,c] \Leftrightarrow \\
\val{D}(c) = \val{D}(a)-1$ modulo $\tower{m}{n}$. 
\end{description}

To do so, we add to $\Psi_n$ the following conjuncts. The argumentation is in each case 
virtually identical to that given in the proof of Theorem~\ref{thm:bigModels}.
For $X$- and $Y$-harmony, we add to our signature binary predicates $\kout{X}$, $\kout{Y}$,
and add to $\Psi_n$ the following conjuncts.
\begin{equation*}
\begin{split}
& \smash{\bigwedge_{D \in \{X,Y\}} }
\forall x_1(\kint{m-1}(x_1) \rightarrow \\
& \qquad \qquad \qquad \forall x_2(\vertex(x_2) \wedge \pm \kin{D}(x_1,x_2) \rightarrow \\
& \qquad \qquad \qquad \qquad \forall x_3(\kint{m-1}(x_3) \rightarrow\\
& \qquad \qquad \qquad \qquad \qquad 
 \kdeq{m-1}{1}(x_1,x_2,x_3) \rightarrow
 \pm \kout{D}(x_2, x_3)))).
\end{split}
\end{equation*}
Again, we observe in passing  that, if
$b$ is a vertex and $c$, $c'$ are ($m-1$)-integers with
$\val{k}(c) = \val{k}(c')$, then $\fA \models \kout{D}[b,c] \Leftrightarrow \fA \models \kout{D}[b,c']$.
At this point, we have established that, in any model $\fA$ of $\Psi_n$, any vertex $b$ has
well-defined coordinates $(\val{X}(b), \val{Y}(b))$.

For $X$- and $Y$-equality, add to $\Psi_n$ the following conjuncts.
\begin{align*}
\begin{split}
& \smash{\bigwedge_{D \in \{X,Y\}}}
\forall x_1 (\kint{m-1}(x_1) \rightarrow\\
& \qquad \qquad \qquad \forall x_2 (\vertex(x_{2}) \rightarrow \\
& \qquad \qquad \qquad \qquad \forall x_3 (\vertex(x_{3}) \rightarrow \\
& \qquad \qquad \qquad \qquad \qquad \forall x_{4} (\kint{m-1}(x_{4}) \wedge \kdeq{m-1}{2}(x_1, x_2, x_3, x_{4}) \rightarrow \\
& \qquad \qquad \qquad \qquad \qquad \qquad (\keqDig{D}(x_2, x_3, x_{4})  \leftrightarrow\\
& \qquad \qquad \qquad \qquad \qquad \qquad \qquad 
  (\kin{D}^{2}(x_1, x_2, x_3, x_{4}) \leftrightarrow \kout{m-1}(x_{3},x_{4}))))))).
\end{split}\\
\begin{split}
& \smash{\bigwedge_{D \in \{X,Y\}}}
\forall x_1 (\vertex(x_1) \rightarrow\\
& \qquad \qquad\qquad \forall x_2 (\vertex(x_{2}) \rightarrow \\
& \qquad \qquad\qquad \qquad (\keq{D}(x_1, x_{2}) \leftrightarrow\\
& \qquad \qquad\qquad \qquad  \qquad \forall x_{3} (\kint{m-1}(x_{3}) \rightarrow \keqDig{D}(x_1, x_2, x_{3}))))).
\end{split}
\end{align*}

For $X$- and $Y$-zero, add to $\Psi_n$ the following conjuncts.
\begin{equation*}
\bigwedge_{D \in \{X,Y\}}
   \forall x_1 (\vertex(x_1) \rightarrow \kzero{D}(x_1) \leftrightarrow \forall x_2 \neg \kout{D}(x_1,x_2)).
\end{equation*}
For $X$- and $Y$-predecessor, let $\rho_D(x_1, \dots, x_4)$ be the formula
\begin{equation*}
\begin{split}
& (\kinStar{D}^{2}(x_1, \dots, x_{4}) \rightarrow (\kin{D}^{2}(x_1, \dots, x_{4}) \leftrightarrow \neg 
\kout{D}(x_{3}, x_{4}))) 
\ \wedge \\
& \qquad  \qquad 
  (\neg \kinStar{D}^{2}(x_1, \dots, x_{4}) \rightarrow (\kin{D}^{2}(x_1, \dots, x_{4}) \leftrightarrow 
  \kout{k}(x_{3}, x_{4})))
\end{split}
\end{equation*}
for $D \in \{X,Y\}$, and add to $\Psi_n$ the following conjuncts.
\begin{align*}
\begin{split}
& \smash{\bigwedge_{D \in \{X,Y\}}} \forall x_1 (\kint{m-1}(x_1) \rightarrow \\
& \quad \quad \quad \quad \quad \quad \forall x_2 (\vertex(x_2) \rightarrow \\
& \quad \quad \quad \quad \quad \quad \quad \forall x_{3} (\vertex(x_{3}) \rightarrow \\
& \quad \quad \quad \quad \quad \quad \quad \quad \forall x_5 (\kint{m-1}(x_{4}) \wedge \kdeq{m-1}{2}(x_1,\dots,x_{\ell+4}) \rightarrow \\
& \quad \quad \quad \quad \quad \quad \quad \quad \qquad (\kpredDig{D}(x_2, x_3, x_{4}) \leftrightarrow \rho_D(x_1, \dots, x_{4})))))
\end{split}\\
\ \\
\begin{split}
& \smash{\bigwedge_{D \in \{X,Y\}}}\forall x_1 \forall x_{2} (\kpred{D}(x_1, x_{2}) \leftrightarrow\\
& \quad \quad \quad \qquad \qquad \qquad \forall x_{3} (\kint{m-1}(x_{3}) \rightarrow \kpredDig{D}(x_1, x_2, x_{3}))).
\end{split}
\end{align*}
At this point, we have established that, in any model $\fA$ of $\Psi_n$, and for $D \in \{X,Y\}$,
the properties
$D$-harmony, $D$-zero, $D$-predecessor and $D$-equality obtain.

The following conjuncts of $\Psi_n$ now establish that, for all pairs of integers $i$, $j$ in the range
$[0, \tower{m}{n}-1]$, there exists a vertex $a$ with coordinates $(x,y)$:
\begin{align*}
& \exists x_1 (\vertex(x_1) \wedge \kzero{X}(x_1) \wedge \kzero{Y}(x_1))\\
& \forall x_1 (\vertex(x_1) \rightarrow \exists x_2 (\vertex(x_2) \wedge \kpred{Y}(x_1,x_2) \wedge \keq{X}(x_1,x_2)))\\  
& \forall x_1 (\vertex(x_1) \rightarrow \exists x_2 (\vertex(x_2) \wedge \kpred{X}(x_1,x_2) \wedge \keq{Y}(x_1,x_2))).
\end{align*}

Treating the colours in $C$ as unary predicates,
the following conjuncts assign, to each vertex, $a$, a unique 
colour, namely, the colour which $a$ satisfies.  
\begin{align*}
& \forall x_1 (\vertex(x_1) \rightarrow \bigvee_{c \in C} c(x_1)) \\
& \bigwedge_{c, d \in C}^{c \neq d} \forall x_1 (\vertex(x_1) \rightarrow \neg (c(x_1) \wedge d(x_1)))
\end{align*}
Note that there is no requirement that vertices be uniquely defined by their $X$- and $Y$-coordinates.
Nevertheless, we obtain a well-defined encoding of a grid-colouring by securing the property
\begin{description}
\item[{chromatic harmony}] For all vertices $b$ and $b'$ such that
$\val{X}(b) = \val{X}(b)$ and $\val{Y}(b) = \val{Y}(b)$, and for all
colours $c \in C$, $\fA \models c[b] \Leftrightarrow \fA \models c[b']$.
\end{description}
To do so, we add to $\Psi_n$ the conjunct
\begin{align*}
&\forall x_1 (\vertex(x_1) \wedge c(x_1) \rightarrow \forall x_2 (\vertex(x_2) \wedge \keq{X}(x_1, x_2)
\wedge \keq{Y}(x_1, x_2) \rightarrow c(x_2))).
\end{align*}
Thus, any model $\fA $ of $\Psi_n$ defines a function
$f: [0,\tower{m}{n}-1]^2 \rightarrow C$. To ensure that $f$ is a tiling for the system  
$(C,c_0,H,V)$, we simply add to $\Psi_n$ the conjuncts 
\begin{align*}
& \forall x (\kzero{X}(x) \wedge \kzero{Y}(x) \rightarrow c_0(x))\\
\begin{split}
& \smash{\bigwedge_{(c, d) \not \in H}}
\forall x_1 (\vertex(x_1) \wedge d(x_1) \rightarrow\\ 
& \qquad \qquad \qquad \forall x_2 (\vertex(x_2) \wedge \kpred{X}(x_1, x_2)
\wedge \keq{Y}(x_1, x_2) \rightarrow  \neg c(x_2)))\\
& \smash{\bigwedge_{(c,d) \not \in V}}
\forall x_1 (\vertex(x_1) \wedge d(x_1) \rightarrow\\ 
& \qquad \qquad \qquad \forall x_2 (\vertex(x_2) \wedge \kpred{Y}(x_1, x_2)
\wedge \keq{X}(x_1, x_2) \rightarrow  \neg c(x_2)))
\end{split}
\end{align*}

This completes the construction of $\Psi_n$. We have shown that, if $\Psi_n$ is satisfiable, then
the $\tower{m}{n} \times \tower{m}{n}$-grid colouring problem $(C,c_0,H,V)$ has a solution. Conversely,
a simple check shows that, if the $\tower{m}{n} \times \tower{m}{n}$-grid colouring problem $(C,c_0,H,V)$ has a solution, then by interpreting the predicates involved in $\Psi_n$ as suggested above over a two-dimensional
toroidal grid of $m$-integers, we obtain a model of $\Psi_n$. This completes the reduction.
\end{proof}

\section{Upper bound}
\label{sec:upper}
In this section we first show that $\FL^3$ has the exponential model property, and later we prove a bounded model property for every $\FL^k$ with $k>3$. These results give the upper complexity bounds for the satisfiability problem for every $\FL^k$ with $k\geq 3$. 

It will be convenient to work with fluted formulas having a special form. Fix some purely relational signature $\sigma$ and some positive integer $k$.
A {\em fluted $k$-clause} is a disjunction of fluted $k$-literals. We allow the absurd formula $\bot$ (i.e.~the empty disjunction) to count as a fluted $k$-clause. Thus, any literal of a fluted $k$-clause is either 0-ary (i.e.~a proposition letter or its negation) or
has arguments $x_h, \dots, x_k$, in that order, for some $h$ ($1 \leq h \leq k)$.
When writing fluted $k$-clauses, we silently remove bracketing, re-order literals and delete duplicated literals as necessary.
It is easy to see that the number of fluted $k$-clauses, modulo these operations, is precisely $2^{2|\sigma|}$. 
To reduce notational clutter, we identify sets of clauses with their
conjunctions where convenient, writing $\Gamma$ instead of the more correct $\bigwedge \Gamma$.

A formula $\phi$ 
of $\FL^k$ ($k \geq 1$) is in \textit{normal form} if it is of the form
\begin{multline}
\forall x_1 \cdots x_{k} . \Omega \ \wedge\\
{\bigwedge_{i=1}^{s} }\forall x_1 \cdots x_{k-1} \left(\alpha_i \rightarrow \exists x_k . \Gamma_i\right) \wedge
{\bigwedge_{j=1}^{t}} \forall x_1 \cdots x_{k-1} (\beta_j \rightarrow \forall x_k . \delta_j),
\label{eq:nf}
\end{multline}
where $\Omega, \Gamma_1, \dots, \Gamma_s$ are sets of fluted $k$-clauses,
$\alpha_1, \dots, \alpha_s$, $\beta_1, \dots, \beta_t$ fluted \mbox{$(k-1)$}-atoms, 
and $\delta_1, \dots, \delta_t$ 
fluted $k$-clauses. We refer to $\forall x_1 \cdots x_{k} . \Omega$ as the \textit{static conjunct}
of $\phi$, to conjuncts of the form $\forall x_1 \cdots x_{k-1} \left(\alpha_i \rightarrow \exists x_k \Gamma_i\right)$ as the \textit{existential conjuncts}
of $\phi$, and to conjuncts of the form $\forall x_1 \cdots x_{k-1} (\beta_j \rightarrow \forall x_k . \delta_j)$ as the \textit{universal conjuncts} of $\phi$.

\begin{lemma}
	Let $\phi$ be an $\FL^{k}$-formula, $k \geq 1$. We can compute, in time bounded by a polynomial function of $\sizeOf{\phi}$, an $\FL^{k}$-formula $\psi$ in normal form such that:
	\textup{(}i\textup{)} $\models \psi \rightarrow \phi$; and \textup{(}ii\textup{)} any model of $\phi$ can be expanded to a model of $\psi$. 
	\label{lma:nf}
\end{lemma}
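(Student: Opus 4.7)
The plan is to perform a standard Tseitin-style structural renaming, analogous to the familiar conversion into Scott normal form for two-variable first-order logic. I first put $\phi$ in negation normal form, using De Morgan's laws to push negations down to atoms; this preserves $\FLk{k}$-membership. A key structural fact about fluted formulas that I exploit is that at every subformula of $\phi$ the free variables constitute an initial segment $x_1, \ldots, x_\ell$ of $\bar{x}_\omega$ with $\ell \leq k$, so that every subformula has a well-defined ``depth'' $\ell$.

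Proceeding bottom-up, for each non-literal subformula $\chi$ of $\phi$ of depth $\ell$, I introduce a fresh $\ell$-ary predicate $p_\chi$, replace each occurrence of $\chi$ in $\phi$ by $p_\chi(x_1, \ldots, x_\ell)$, and record the defining axiom $\forall x_1 \cdots x_\ell (p_\chi \leftrightarrow \chi')$, where $\chi'$ is $\chi$ after its immediate subformulas have themselves been named. After splitting each biconditional and converting to clausal form, the resulting conjuncts fall into three classes that match the target normal form: Boolean combinations ($\chi_1 \vee \chi_2$ or $\chi_1 \wedge \chi_2$) yield only universal fluted $k$-clauses, i.e., static conjuncts; an existential subformula $\chi = \exists x_{\ell+1} \eta$ yields a ``forward'' axiom $\forall x_1 \cdots x_\ell (p_\chi \to \exists x_{\ell+1} p_\eta)$ together with a ``backward'' universal clause $\forall x_1 \cdots x_{\ell+1} (p_\eta \to p_\chi)$; and a universal subformula yields the dual pair. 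At the end the top-level $\phi$ is represented by a single fresh 0-ary $p_\phi$, which is asserted.

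The main technical obstacle is aligning the quantifier depths. A forward axiom from an existential $\chi = \exists x_{\ell+1} \eta$ naturally has the shape $\forall x_1 \cdots x_\ell (\alpha \to \exists x_{\ell+1} \gamma)$, whereas the normal form requires the uniform shape $\forall x_1 \cdots x_{k-1} (\alpha \to \exists x_k . \Gamma)$. I deal with this by padding such axioms up to depth $k{-}1 \to k$, introducing auxiliary predicates with semantically inert trailing arguments, exactly as in the padding technique used throughout the proof of Theorem~\ref{thm:bigModels} (e.g., formulas~$(\Phi_2)$ and~$(\Phi_{18})$). Because tacking extra arguments onto the end of a fluted atom preserves the contiguous-suffix property, the padded conjuncts remain in $\FLk{k}$; the auxiliary inert-argument axioms themselves are static conjuncts. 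Universal conjuncts at lower depths are padded analogously.

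Properties (i) and (ii) then follow immediately from the construction: the collective effect of the defining biconditionals is that each $p_\chi$ is equivalent to $\chi$, so $p_\phi$ entails $\phi$, giving (i); and for (ii), any $\fA \models \phi$ can be expanded to a model of $\psi$ by interpreting each fresh $p_\chi$ as the set of tuples satisfying $\chi$ in $\fA$. Since the number of subformulas of $\phi$ is linear in $\sizeOf{\phi}$, and each defining axiom together with the padding it induces causes only polynomial blow-up, the whole transformation runs in polynomial time.
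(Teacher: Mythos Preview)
Your approach is correct and matches the paper's own proof, which is literally the single sentence ``By moving negations inward in the usual way, and applying standard re-writing techniques.'' One small slip: the inert-argument padding axioms are not static conjuncts as you claim, since a clause like $\neg p(x_1,\dots,x_\ell)\vee p^{+}(x_1,\dots,x_{\ell+1})$ mixes literals whose argument lists end at different indices and is therefore not a fluted $k$-clause; the cleanest repair is to \emph{shift} variable indices rather than pad---each depth-$\ell$ defining axiom $\forall x_1\cdots x_\ell\,(p_\chi \to Q x_{\ell+1}\,\eta)$ becomes, after the substitution $x_i\mapsto x_{i+k-1-\ell}$, an existential or universal conjunct of exactly the required shape, with no auxiliary padding predicates needed.
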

\begin{proof}
	By moving negations inward in the usual way, and applying standard re-writing techniques.
\end{proof}

We begin by showing that $\FL^3$ has the exponential-sized model property.
Here, the normal-form~\eqref{eq:nf} becomes 
\begin{equation}
\forall x_1 x_2 x_3 . \Omega \ \wedge 
\bigwedge_{i=1}^{s} \forall x_1 x_{2} \left(\alpha_i \rightarrow \exists x_3 . \Gamma_i\right) \wedge
\bigwedge_{j=1}^{t} \forall x_1 x_{2} (\beta_j \rightarrow \forall x_3 . \delta_j).
\label{eq:nf3}
\end{equation}
In the sequel, we shall additionally suppose that formulas do not contain any proposition letters. After all, when testing for satisfiability, the relevant truth-values can be simply guessed and eliminated accordingly. Since the
complexities involved are all super-exponential, such a guessing processes can be carried out without additional cost.

For $k \geq 1$, and relational signature $\sigma$, a {\em fluted $k$-type} \textit{over} $\sigma$ is a maximal, consistent set of fluted $k$-literals \textit{over} $\sigma$. As with sets of $k$-clauses, we identify fluted $k$-types with their conjunctions when convenient. If $\fA$ is a structure and $\bar{a}$ a tuple from $A$, then
there exists a unique fluted $k$-type $\tau$ satisfied by $\bar{a}$ in $\fA$: we call $\tau$ the 
fluted $k$-type of $\bar{a}$, and denote it by $\ftp^\fA[\bar{a}]$. If $\tau$ is a $k$-type, we denote by $\tau^{[1]}$ the result of incrementing the indices of all the variables in $\tau$, and if, in addition, $k \geq 2$, we denote by 
$\tau_\uparrow$ the result of removing all literals featuring the variable $x_1$ (i.e.~all literals of arity $k$) and decrementing the indices of all variables. Notice that $\tau_\uparrow$ will be a $(k-1)$-type; however,
$\tau^{[1]}$ will not be a $(k+1)$-type over $\sigma$ if $\sigma$ features any predicates of arity $(k+1)$.

A {\em connector-type} (\textit{over} $\sigma$) is a triple $\langle \pi, I,O \rangle$, where $\pi$ is a 1-type over $\sigma$,
$I$ is a set of fluted 2-types over $\sigma$ such that, for all $\tau \in I$, $\tau_\uparrow = \pi$, and
$O$ is a set of fluted 2-types over $\sigma$. We refer to $I$ as the connector-type's {\em inputs}, and to $O$ as its {\em outputs}. If $\fA$ is any structure interpreting $\sigma$, and $b \in A$, we define
\begin{equation*}
\ConA[b] = \langle \tpA[b], \{\ftp^\fA[a,b] \mid  a \in A\}, \{\ftp^\fA[b,c] \mid  c \in A\} \rangle.
\end{equation*}
It is obvious that $\ConA[b]$ is a connector-type; we call it {\em the connector-type of $b$ in $\fA$}. 

Let $\phi$ be a formula of the form~\eqref{eq:nf3} over some signature $\sigma$. A connector-type $\fc = \langle \pi, I,O \rangle$ over $\sigma$ is said to be \textit{locally compatible} \textit{with} $\phi$ if the following conditions hold:
\begin{description}
	\item[LC$\exists$] for every $i$ ($1 \leq i \leq  s$) and every $\tau \in I$ such that $\models \tau \rightarrow \alpha_i$, there exists $\tau' \in O$ such that the formula $\psi_i(x_1,x_2,x_3)$ is consistent, where
	\begin{equation*}
	\psi_i(x_1, x_2, x_3):= \tau \wedge {\tau'}^{[1]} \wedge \Gamma_i \wedge \Omega \wedge
	\bigwedge_{j=1}^{t} (\beta_j \rightarrow  \delta_j);
	\end{equation*}
	
	\item[LC$\forall$] for every $\tau \in I$ and every $\tau' \in O$, the formula $\psi(x_1,x_2,x_3)$ is consistent, where
	\begin{equation*}
	\psi(x_1, x_2, x_3):= \tau \wedge {\tau'}^{[1]} \wedge \Omega \wedge
	\bigwedge_{j=1}^{t} (\beta_j \rightarrow  \delta_j).
	\end{equation*}
\end{description}
\begin{lemma}
	If $\fA \models \phi$ and $b \in A$, then $\ConA[b]$ is locally compatible with $\phi$.
	\label{lma:compatibleEasy}
\end{lemma}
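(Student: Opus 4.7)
The plan is to use $\fA$ itself to supply explicit witnesses for each of the two local compatibility conditions. Writing $\ConA[b] = \langle \pi, I, O\rangle$, where $\pi = \tpA[b]$, $I = \{\ftp^\fA[a,b] \mid a \in A\}$ and $O = \{\ftp^\fA[b,c] \mid c \in A\}$, the argument is essentially an unpacking of definitions.

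For LC$\exists$, fix $i$ and $\tau \in I$ with $\models \tau \rightarrow \alpha_i$. By the definition of $I$, pick $a \in A$ with $\tau = \ftp^\fA[a,b]$; then $\fA \models \alpha_i[a,b]$. Applying the $i$-th existential conjunct of $\phi$ at the pair $(a,b)$ produces some $c \in A$ with $\fA \models \Gamma_i[a,b,c]$. Set $\tau' := \ftp^\fA[b,c] \in O$. The triple $(a,b,c)$ then satisfies every conjunct of $\psi_i(x_1,x_2,x_3)$: $\tau$ by the choice of $a$; ${\tau'}^{[1]}$ (which is $\tau'$ with $x_1,x_2$ reindexed to $x_2,x_3$) by the choice of $c$; $\Gamma_i$ by the existential witness; $\Omega$ from the static conjunct of $\phi$; and each $\beta_j \rightarrow \delta_j$ from the universal conjuncts of $\phi$. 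Hence $\psi_i$ is consistent.

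For LC$\forall$, take arbitrary $\tau \in I$ and $\tau' \in O$, and pick $a,c \in A$ with $\tau = \ftp^\fA[a,b]$ and $\tau' = \ftp^\fA[b,c]$. Exactly as before, $(a,b,c)$ realises $\tau \wedge {\tau'}^{[1]} \wedge \Omega \wedge \bigwedge_j (\beta_j \rightarrow \delta_j)$; no existential conjunct is invoked, which is consistent with the absence of any $\alpha_i$-implication in $\psi$.

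There is no substantive obstacle here; the only point requiring care is keeping the variable indices straight. Input types in $I$ describe pairs with $b$ at position $x_2$, whereas output types in $O$ describe pairs with $b$ at position $x_1$; applying $(\cdot)^{[1]}$ to an output shifts it to a 2-type over $x_2,x_3$ that, under the assignment $x_2 \mapsto b$, $x_3 \mapsto c$, describes precisely the pair $(b,c)$. Gluing an input and a shifted output along the common slot $x_2 \mapsto b$ is therefore automatic, which is exactly why the triple $(a,b,c)$ realises the three-way conjunction $\tau \wedge {\tau'}^{[1]} \wedge \cdots$ in both cases.
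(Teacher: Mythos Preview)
Your proof is correct and follows essentially the same approach as the paper: pick $a$ realising $\tau$ and (for LC$\exists$) a witness $c$ from the existential conjunct, then observe that the triple $(a,b,c)$ satisfies $\psi_i$ (resp.\ $\psi$) in $\fA$. The paper's version is terser and leaves LC$\forall$ as ``treated similarly'', while you spell out the variable-index bookkeeping explicitly; substantively the arguments are the same.
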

\begin{proof}
	Let $\ConA[b] = \langle \pi, I, O \rangle$.
	For LC$\exists$, pick $\tau \in I$ and $i$ ($1 \leq i \leq s$). From the construction of $I$,
	select $a \in A$ such that $\fA \models \tau[a,b]$. If $\models \tau \rightarrow \alpha_i$, then, since $\fA \models \phi$, select $c \in A$ such that the triple $a,b,c$ satisfies $\Gamma_i \wedge
	\Omega \wedge \bigwedge_{j=1}^{t} (\beta_j \rightarrow  \delta_j)$ in $\fA$, and set
	$\tau' = \ftp^\fA[b,c]$. Condition LC$\forall$ is treated similarly.
\end{proof}

A set $C$ of connector-types is said to be \textit{globally coherent} if the following conditions hold:
\begin{description}
	\item[GC$\exists$]
	for all $\fc = \langle \pi, I,O \rangle$ and all $\tau \in O$, there exists $\fd = \langle \pi', I',O' \rangle \in C$
	such that $\tau \in I'$;
	\item[GC$\forall$]
	for all $\fc = \langle \pi, I,O \rangle$ and all $\fd = \langle \pi', I',O' \rangle \in C$,
	$O \cap I' \neq \emptyset$.
\end{description}

\begin{lemma}
	If $\fA$ is a structure interpreting $\sigma$, then $\{\ConA[b] \mid b \in A\}$ is globally coherent.
	\label{lma:coherentEasy}
\end{lemma}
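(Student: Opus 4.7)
The plan is to prove each of the two conditions GC$\exists$ and GC$\forall$ directly from the definition of the connector-type $\ConA[b]$, by extracting appropriate witnesses from $A$ itself. The argument is essentially a matter of tracing the definitions, with no nontrivial obstacle; the only thing to watch is that each extracted 2-type lands in the intended component (input or output) of the relevant connector-type.

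First I would handle GC$\exists$. Let $\fc = \ConA[b] = \langle \pi, I, O\rangle$ and fix $\tau \in O$. By definition of $O$, there is some $c \in A$ with $\tau = \ftp^\fA[b,c]$. Set $\fd = \ConA[c] = \langle \pi', I', O'\rangle$, so $\fd \in C$. Since $b \in A$, the 2-type $\ftp^\fA[b,c] = \tau$ belongs to $I'$ by definition. Note that $\tau_\uparrow = \tpA[c] = \pi'$ automatically, so $\tau$ is a legitimate element of $I'$ in the sense required by the definition of connector-type. This gives the desired $\fd$.

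Next I would handle GC$\forall$. Let $\fc = \ConA[b] = \langle \pi, I, O\rangle$ and $\fd = \ConA[b'] = \langle \pi', I', O'\rangle$ be any two members of $C$. Consider the 2-type $\tau = \ftp^\fA[b,b']$. Since $b' \in A$, the definition of $O$ gives $\tau \in O$; since $b \in A$, the definition of $I'$ gives $\tau \in I'$. Hence $\tau \in O \cap I'$, and in particular $O \cap I' \neq \emptyset$.

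The hard part, such as it is, is purely bookkeeping: one has to remember that $I'$ is indexed by $b'$ (so it collects pairs of the form $\ftp^\fA[a,b']$ as $a$ ranges over $A$), while $O$ is indexed by $b$ (so it collects pairs of the form $\ftp^\fA[b,c]$ as $c$ ranges over $A$); taking $a = b$ and $c = b'$ simultaneously is what makes the two conditions fall out immediately. No appeal to $\phi$ or to local compatibility is needed, as expected from the statement.
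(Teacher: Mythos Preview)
Your proof is correct and is essentially the same approach as the paper's, which simply says the lemma is immediate from the definition of $\ConA[b]$; you have merely written out the details of that immediacy.
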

\begin{proof}
	Immediate from the definition of $\ConA[b]$.
\end{proof}

The relations between connector-types required by global coherence involve only (fluted) 2-types. This fact underlies the following lemma.
\begin{lemma}
	If $C$ is a globally coherent, non-empty set of connector-types, there exists a globally coherent, non-empty
	subset  $D \subseteq C$ 
	of cardinality at most $2^{|\sigma|}$.
	\label{lma:small}
\end{lemma}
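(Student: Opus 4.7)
The plan is to observe first that GC$\forall$ is a universally quantified condition over pairs of connector-types, and is therefore automatically inherited by any subset $D \subseteq C$. Thus the task reduces to constructing a non-empty $D \subseteq C$ of cardinality at most $2^{|\sigma|}$ that still satisfies GC$\exists$: every fluted 2-type appearing as an output of some element of $D$ must also appear as an input of some element of $D$.

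The key numerical input is that the total number of fluted 2-types over $\sigma$ is at most $2^{|\sigma|}$. This follows by inspecting the definition of fluted $k$-atom: a fluted 2-type over $\sigma$ is determined by truth values on at most $|\sigma|$ atoms---one per $0$-ary predicate, one $p(x_2)$-atom per unary predicate, and one $p(x_1,x_2)$-atom per binary predicate, with higher-arity predicates contributing no fluted 2-atoms at all.

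From here I would carry out a standard fixpoint/closure construction. For each fluted 2-type $\tau$ that appears as an input of some connector-type in $C$, I would fix once and for all a designated witness $w(\tau) \in C$ with $\tau \in I_{w(\tau)}$; global coherence of $C$ ensures such a witness exists whenever $\tau$ appears as an output of any $\fc \in C$. Starting from an initial seed $\fc_0 \in C$ and setting
\[
D_{n+1} := D_n \cup \{w(\tau) : \tau \in O_\fc \text{ for some } \fc \in D_n\},
\]
the chain stabilizes at some $D$ which is closed under GC$\exists$ by construction. Since each non-seed member of $D$ is $w(\tau)$ for some 2-type $\tau$, the image-count bound yields $|D| \leq 2^{|\sigma|}$---provided we can absorb the seed into this tally.

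The main technical point---though a small one---is therefore handling the seed cleanly. If every $\fc \in C$ has $O_\fc = \emptyset$, then any singleton $D = \{\fc_0\}$ already works, with $|D| = 1$. Otherwise, GC$\exists$ for $C$ applied to any $\fc \in C$ with $O_\fc \neq \emptyset$ produces some $\fd \in C$ with $I_\fd \neq \emptyset$, and I would take the initial seed to be $w(\tau_0)$ itself for some $\tau_0 \in I_\fd$, so that the seed is already accounted for among the witnesses. Verification that $D$ satisfies GC$\exists$ is immediate from the fixpoint property, and GC$\forall$ is inherited from $C$ by the opening observation.
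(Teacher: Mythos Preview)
Your proposal is correct and follows essentially the same greedy closure argument as the paper: start from a seed, repeatedly add a connector-type witnessing (as input) some as-yet-unwitnessed output 2-type, and bound the result by the number of fluted 2-types, which is at most $2^{|\sigma|}$. Your explicit witness function $w(\cdot)$ and the case split to absorb the seed into the witness count are a slightly cleaner way of landing exactly on $2^{|\sigma|}$ than the paper's step-counting via the growing set $I^*$, but the underlying idea is identical.
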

\begin{proof}
	Pick any connector-type in $C$, initialize $D$ to be the singleton containing this
	connector-type, and initialize $I^*$ to be the set of its inputs, $I$. 
	We shall add connector-types to the set $D$ and fluted 2-types to the set $I^*$, maintaining the invariant that $I^*$ is the union of all the inputs of the connector-types in $D$.
	We call a connector-type in
	$D$ {\em satisfied} if its outputs are also included in the set $I^*$. 
	
	Now execute the following procedure until $D$ contains no unsatisfied \linebreak 
	connector-types.
	Pick some unsatisfied $\langle \pi, I,O \rangle \in D$ and some $\tau \in O \setminus I^*$. 
	By {\bf GC}$\exists$, there exists a connector-type $\langle \pi', I',O' \rangle \in C$ such that $\tau \in I$. Set $D := D \cup \{\langle \pi', I',O' \rangle\}$ and $I^* := I^* \cup I'$. These assignments maintain the invariant on $D$ and $I^*$. This process terminates after, say, $h \leq 2^{|\sigma|}$ steps, since there are only $2^{|\sigma|}$ fluted 2-types over $\sigma$, and $|I^*|$ increases by at least one at each step.
	When it does so, $D$ is globally coherent, and indeed $|D| \leq h$.
\end{proof}

\begin{lemma}
	Let $\phi \in FL^3$ be in normal form over a signature $\sigma$, and let $\phi$ have $s$ existential conjuncts. The following are equivalent:
	\textup{(}i\textup{)}
	$\phi$ is satisfiable;
	\textup{(}ii\textup{)}
	there exists a non-empty, globally coherent set $C$ of connector-types locally compatible with $\phi$
	such that $|C| \leq 2^{|\sigma|}$;
	\textup{(}iii\textup{)}
	$\phi$ is satisfiable over a domain of size at most $s \cdot 2^{2|\sigma|}$.
	\label{lma:construction}
\end{lemma}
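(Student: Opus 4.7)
The plan is to prove the three conditions equivalent in cyclic fashion. The implication $(\text{iii}) \Rightarrow (\text{i})$ is immediate. For $(\text{i}) \Rightarrow (\text{ii})$, given $\fA \models \phi$, form $C_0 = \{\ConA[b] : b \in A\}$: by Lemma~\ref{lma:compatibleEasy} each member of $C_0$ is locally compatible with $\phi$, and by Lemma~\ref{lma:coherentEasy} $C_0$ is globally coherent; Lemma~\ref{lma:small} then extracts a non-empty, globally coherent subset $C \subseteq C_0$ with $|C| \leq 2^{|\sigma|}$, and local compatibility transfers since it is a per-connector-type property.

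The substantive direction $(\text{ii}) \Rightarrow (\text{iii})$ I would prove by an explicit model construction. Take the domain to be $B = \{e_{\fc, \tau, i} : \fc \in C, \tau \in I_\fc, i \in \{1, \ldots, s\}\}$, giving $|B| \leq |C| \cdot 2^{|\sigma|} \cdot s \leq s \cdot 2^{2|\sigma|}$ from $|C| \leq 2^{|\sigma|}$ and $|I_\fc| \leq 2^{|\sigma|}$. Each $e_{\fc,\tau,i}$ receives an assigned connector-type $\mathfrak{e}_{\fc,\tau,i} \in C$: when $\tau \models \alpha_i$, LC$\exists$ furnishes a specific $\tau'_{\fc,\tau,i} \in O_\fc$ making $\tau \wedge (\tau'_{\fc,\tau,i})^{[1]} \wedge \Gamma_i \wedge \Omega \wedge \bigwedge_j (\beta_j \to \delta_j)$ consistent, and GC$\exists$ then supplies some $\mathfrak{e} \in C$ with $\tau'_{\fc,\tau,i} \in I_\mathfrak{e}$; when $\tau \not\models \alpha_i$, pick $\mathfrak{e}_{\fc,\tau,i}$ arbitrarily. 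The $1$-type of $e_{\fc,\tau,i}$ is then fixed by its assigned connector-type. Intuitively, $e_{\fc,\tau,i}$ plays the role of designated existential witness for every pair $(u,v)$ in which $\mathrm{ct}(v) = \fc$, $\ftp^\fB[u,v] = \tau$, and conjunct $\alpha_i$ fires.

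For the $2$-types, I would walk through each ordered pair $(u,v) \in B^2$. If $v = e_{\fc,\tau,i}$ with $\fc = \mathrm{ct}(u)$ and $\tau \models \alpha_i$, set $\ftp^\fB[u,v] := \tau'_{\fc,\tau,i}$; by the choice of $\mathfrak{e}_{\fc,\tau,i}$ this value lies in $O_{\mathrm{ct}(u)} \cap I_{\mathrm{ct}(v)}$. Otherwise, pick any element of the non-empty set $O_{\mathrm{ct}(u)} \cap I_{\mathrm{ct}(v)}$ supplied by GC$\forall$. The $3$-ary predicates on each triple $(u,v,w)$ are fixed by extracting a consistent $3$-type from LC$\forall$, or from LC$\exists$ (additionally realizing $\Gamma_i$) when $w$ is the designated witness for an existential obligation firing on $(u,v)$. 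The check that $\fB \models \phi$ is then routine: the static and universal conjuncts hold at every triple by LC$\forall$, and a live existential obligation on a pair $(u,v)$ with $\ftp^\fB[u,v] = \tau \models \alpha_i$ is discharged by the successor $e_{\mathrm{ct}(v),\tau,i}$ via LC$\exists$.

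The hard part will be verifying that this assignment is internally consistent — in particular, that the designated-witness clause and the default clause never issue conflicting values on the same ordered pair, and that every value so assigned lies in the required intersection. The triple-indexing by $(\fc,\tau,i)$ provides exactly the multiplicity required: distinct existential obligations at the same target connector-type are routed through distinct domain elements, so each ordered pair carries at most one designated-witness constraint, and that constraint is automatically in the intersection mandated by GC$\forall$ through the choice of $\mathfrak{e}_{\fc,\tau,i}$.
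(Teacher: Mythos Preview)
Your proposal is correct and follows essentially the same approach as the paper: build the domain as (a subset of) $C \times \{\text{fluted 2-types}\} \times \{1,\ldots,s\}$, fix 2-types via \textbf{GC}$\exists$/\textbf{GC}$\forall$ so that every realized 2-type lies in $O_{\mathrm{ct}(u)} \cap I_{\mathrm{ct}(v)}$, and then fix 3-types via \textbf{LC}$\exists$/\textbf{LC}$\forall$. The only cosmetic difference is the indexing convention---the paper indexes each domain element by its \emph{own} connector-type $\fd$ and an arbitrary 2-type $\tau$, whereas you index by the \emph{caller's} connector-type $\fc$ and $\tau \in I_\fc$, assigning the element's actual connector-type $\mathfrak{e}_{\fc,\tau,i}$ afterwards via \textbf{GC}$\exists$---but both bookkeeping schemes yield the same size bound and the same verification.
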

\begin{proof}
	The implication (i) $\Rightarrow$ (ii) follows from
	Lemmas~\ref{lma:compatibleEasy}, \ref{lma:coherentEasy} and~\ref{lma:small}. The implication (iii) $\Rightarrow$ (i) is trivial. It remains only to show (ii) $\Rightarrow$ (iii). 
	
	Suppose that $C$ is a non-empty, globally coherent set of connector-types locally compatible with $\phi$
	and that $|C| \leq 2^{|\sigma|}$. Denote the set of fluted atomic 2-types over $\sigma$ by
	$\ftp^2(\sigma)$. For all $\fc \in C$
	and all $\tau \in \ftp^2(\sigma)$, let 
	$A_{\fc,\tau}$ be a fresh set of $s$ elements, $A_{\fc,\tau} = \{a_{\fc,\tau,1}, \dots, a_{\fc,\tau, s}\}$. For each $\fc \in C$, let $A_\fc = \bigcup_{\tau \in \ftp^2(\sigma)} A_{\fc,\tau}$; and let $A = \bigcup_{\fc \in C} A_{\fc}$.
	Hence $|A| \leq  s \cdot 2^{|\sigma|}\cdot |C| \leq  s \cdot 2^{2|\sigma|}$.
	
	For every $\fc= \langle \pi, I,O \rangle  \in C$ and every $\tau \in O$, 
	pick some $\fd= \langle \pi', I',O' \rangle \in C$ such that $\tau \in I' $, by {\bf GC}$\exists$, and then,
	for every $a \in A_\fc$ and every $i$ ($1 \leq i \leq s$), set $\tpA[a,a_{\fd,\tau,i}] = \tau$. The conditions on connector-types ensure that this assignment
	of fluted 2-types is consistent with the 1-type $\pi'$ assigned to $a_{\fd,\tau,i}$; moreover, no clashes can arise by 
	the disjointness of the sets $A_{\fd,\tau}$.
	These assignments having been made, we complete the assignment of 2-types in $\fA$ as follows.
	As long as there exist $a \in A_\fc$ and
	$b \in A_\fd$, with $\fc= \langle \pi, I,O \rangle  \in C$ and $\fd= \langle \pi', I',O' \rangle \in C$, such that
	$\tpA[a,b]$ is not yet defined, pick some $\tau \in O \cap I'$, by {\bf GC}$\forall$, and
	set $\tpA[a,b] = \tau$. Again, the conditions on connector-types ensure that this assignment
	is consistent with the 1-type $\pi'$ assigned to $b$.
	
	At the end of this process, all 2-types have been defined in such a way that, if
	$b \in A_\fc$, where $\fc= \langle \pi, I,O \rangle  \in C$, then:
	\begin{description}
		\item[P1] for all $a \in A$, $\tpA[a,b] \in I$;
		\item[P2] for all $\tau \in O$, there exists a $\fd \in C$ such that, for each of the $s$ elements $c \in A_{\fd,\tau}$, $\tpA[b,c] = \tau$;
		\item[P3] for all $c \in A$, 
		$\tpA[b,c] \in O$.
	\end{description}
	We now assign 3-types in $\fA$
	so as to ensure that $\fA \models \phi$. We deal first with the existential conjuncts. Let $a, b$ be any
	elements in $A$ (not necessarily distinct). Let $b \in A_{\fc}$, where $\fc = \langle \pi,I,O \rangle$, and let
	$\tpA[a,b] = \tau$. By {\bf P1}, $\tau \in I$, so that, by {\bf LC}$\exists$, for all $i$ ($1 \leq i \leq s$), 
	if $\models \tau \rightarrow \alpha_i$, then there exists $\tau' \in O$ such that 
	\begin{equation}
	\tau \wedge {\tau'}^{[1]} \wedge \Gamma_i \wedge \Omega \wedge
	\bigwedge_{j=1}^{t} (\beta_j \rightarrow  \delta_j)
	\label{eq:existsLocal}
	\end{equation}
	is consistent. By {\bf P2}, 
	pick some $\fd \in C$ such that, for each of the $s$ elements $c \in A_{\fd,\tau'}$, $\tpA[b,c] = \tau'$.
	Choose some fluted 3-type $\xi$ consistent with~\eqref{eq:existsLocal}, and set $\tpA[a,b,a_{\fd,i}] = \xi$.
	Clearly, we may carry out this process for each $i$ ($1 \leq i \leq s$), and indeed for each  pair of
	elements $a, b \in A$, without clashes. At the end of this process, all existential conjuncts of $\phi$ will
	be satisfied (irrespective of how the model is completed): moreover, none of the fluted 3-types that have been set violates any of the static or universal conjuncts of $\phi$.
	
	Finally, suppose $a$, $b$ and $c$ are elements of $A$ (not necessarily distinct) such that $\tpA[a,b,c]$ has not been
	defined. Let $\tpA[a,b]= \tau$ and $\tpA[b,c]= \tau'$. Suppose $b \in A_\fc$, where 
	$\fc= \langle \pi, I,O \rangle$, so that, by {\bf P1} and {\bf P3}, $\tau \in I$ and $\tau' \in O$. By
	{\bf LC}$\forall$,
	\begin{equation}
	\tau \wedge {\tau'}^{[1]} \wedge \Omega \wedge
	\bigwedge_{j=1}^{t} (\beta_j \rightarrow  \delta_j)
	\label{eq:forallLocal}
	\end{equation}
	is consistent. Choose some fluted 3-type $\xi$ consistent with~\eqref{eq:forallLocal}, and set\linebreak $\tpA[a,b,c] = \xi$. At the end of this process, $\fA$ will
	be fully defined: moreover, none of the fluted 3-types that have been set violates any of the 
	static or universal conjuncts of $\phi$.
	Thus, $\fA \models \phi$ as required.
\end{proof}

We now turn our attention to obtaining a small model property for 
$\FL^m$ for all $m \geq 3$. Our strategy is to show that, given an $\FL^m$ formula $\phi$
with $s$ existential conjuncts, we
can compute an $\FL^{m-1}$ formula $\psi$ such that: if $\phi$ is satisfiable, then $\psi$
is satisfiable; and if $\psi$ has a model of size $M$, then $\psi$ has a model of size $sM$.
However, the size of $\psi$ and indeed the size of its signature, both increase by an exponential. 
We employ the standard apparatus of resolution theorem-proving. Fixing some relational signature, let $p$ be a predicate of arity $k$ and let $\gamma'$ and $\delta'$ be fluted $k$-clauses. Then, $\gamma = p(x_1, \dots, x_k) \vee \gamma'$ and $\delta = \neg p(x_1, \dots, x_k) \vee \delta'$ are also fluted $k$-clauses, as indeed is $\gamma' \vee \delta'$. In that case, we call $\gamma' \vee \delta'$ a {\em fluted resolvent} of $\gamma$ and $\delta$, and we say that $\gamma' \vee \delta'$ is {\em obtained by fluted resolution} from $\gamma$ and $\delta$ \textit{on} $p(x_1, \dots, x_k)$. Thus, fluted resolution 
is simply a restriction of the familiar resolution rule from first-order logic to the case where the resolved-on literals have maximal arity, $k$. It may be helpful to note the following at this point: 
(i) if $\gamma$ and $\delta$ resolve to form $\epsilon$, then $\models \gamma \wedge \delta \rightarrow \epsilon$;
(ii) the fluted resolvent of two clauses may or may not
involve predicates of arity $k$;
(iii) in fluted resolution, the arguments of the literals in the clauses undergo no change when forming the resolvent;
(iv) if the fluted $k$-clause $\gamma$ involves no predicates
of arity $k$, then it cannot undergo fluted resolution at all.

If $\Gamma$ is a set of fluted $k$-clauses, denote by $\Gamma^*$ the smallest set of fluted $k$-clauses including $\Gamma$ and closed under fluted resolution. If $\Gamma = \Gamma^*$, we say that it is \textit{closed under fluted resolution}. We further denote by $\Gamma^\circ$ the result of deleting
from $\Gamma^*$ any clause involving a predicate of arity $k$. Observe that $\Gamma^\circ$ does not feature the variable $x_1$.

The following lemma is, in effect, nothing more than the familiar completeness theorem for (ordered) propositional resolution.
\begin{lemma}
	Let $\Gamma$ be a set of fluted $k$-clauses, and $\tau$ a fluted $(k-1)$-type.  If $\Gamma^\circ 
	\cup \{\tau(x_2, \dots, x_k)\}$ is consistent, then there exists a fluted $k$-type $\tau^+$ such that
	$\tau^+ \supseteq \tau(x_2, \dots, x_k)$ and $\Gamma \cup \{\tau^+\}$ is consistent.
	\label{lma:resolution}
\end{lemma}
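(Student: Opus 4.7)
The plan is to reduce the lemma to completeness of propositional resolution, as the preamble suggests. Since every fluted $k$-literal has arguments from $x_1,\dots,x_k$ and those of arity strictly less than $k$ use only $x_2,\dots,x_k$, I split each clause $\gamma \in \Gamma$ uniquely as $\gamma = \gamma_k \vee \gamma_{<k}$, where $\gamma_k$ collects the arity-$k$ literals and $\gamma_{<k}$ the rest. The fluted $(k-1)$-type $\tau(x_2,\dots,x_k)$ already decides the truth value of every lower-arity literal that can appear anywhere, while the arity-$k$ atoms $p(x_1,\dots,x_k)$ may be treated as independent propositional variables.

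I then form the propositional clause set
\[
S_\tau \;=\; \bigl\{\gamma_k \;:\; \gamma \in \Gamma,\ \tau(x_2,\dots,x_k) \not\models \gamma_{<k}\bigr\},
\]
viewing the arity-$k$ atoms as propositional letters. A fluted $k$-type $\tau^+$ extending $\tau(x_2,\dots,x_k)$ is the same datum as an assignment to these propositional letters, and consistency of $\Gamma \cup \{\tau^+\}$ is equivalent to that assignment satisfying $S_\tau$ (clauses $\gamma \in \Gamma$ whose $\gamma_{<k}$ is satisfied by $\tau$ are automatically satisfied and so may be dropped). It thus suffices to show that $S_\tau$ is propositionally satisfiable.

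For this I invoke completeness of propositional resolution: if $S_\tau$ were unsatisfiable, resolution on these propositional letters would derive the empty clause from it. The crucial lifting observation is that any propositional resolution step producing $\gamma_k' \vee \delta_k'$ from $p(\bar{x}) \vee \gamma_k'$ and $\neg p(\bar{x}) \vee \delta_k'$ is mirrored by a fluted resolution step in $\Gamma^{*}$ producing $\gamma_k' \vee \delta_k' \vee \gamma_{<k} \vee \delta_{<k}$, since fluted resolution literally copies the lower-arity parts through. By induction on the length of the hypothetical propositional derivation of $\bot$, keeping track of the accumulating disjunction of lower-arity parts, we arrive at a clause $\epsilon \in \Gamma^{*}$ whose arity-$k$ part is empty and whose lower-arity part is the disjunction of the $\gamma_{<k}$'s of the clauses used. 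Each such $\gamma_{<k}$ is falsified by $\tau(x_2,\dots,x_k)$ by the very definition of $S_\tau$, so the whole disjunction is falsified by $\tau(x_2,\dots,x_k)$; but then $\epsilon \in \Gamma^\circ$ contradicts the hypothesized consistency of $\Gamma^\circ \cup \{\tau(x_2,\dots,x_k)\}$.

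The main obstacle is the bookkeeping in this lifting step: one has to check that, along the induction, every intermediate fluted resolvent really does accumulate its premises' lower-arity disjuncts in the way required, so that a propositionally empty final resolvent corresponds to a clause of $\Gamma^\circ$ whose entire lower-arity part is falsified by $\tau(x_2,\dots,x_k)$. Once $S_\tau$ is known to be satisfiable, any satisfying assignment to the arity-$k$ atoms, combined with the literals dictated by $\tau(x_2,\dots,x_k)$ at the lower arities, uniquely determines the required fluted $k$-type $\tau^+ \supseteq \tau(x_2,\dots,x_k)$ with $\Gamma \cup \{\tau^+\}$ consistent.
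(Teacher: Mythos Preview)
Your proposal is correct and follows essentially the same approach as the paper: both reduce the problem to propositional resolution completeness on the arity-$k$ atoms, with the lower-arity literals already decided by $\tau(x_2,\dots,x_k)$. The only difference is presentational: the paper builds $\tau^+$ directly by extending $\tau$ one $k$-ary predicate at a time (showing that if both polarities at some stage violate $\Gamma^*$ then a single fluted resolution step shows the previous stage already did), whereas you argue the contrapositive by lifting a hypothetical propositional refutation of $S_\tau$ to a clause of $\Gamma^\circ$ falsified by $\tau$.
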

\begin{proof}
	Enumerate the $k$-ary predicates occurring in $\Gamma$ as $p_1, \dots, p_n$. Note that none of these predicates
	occurs in $\tau$.
	Define a \textit{level-$i$ extension}
	of $\tau$ inductively as follows: (i) $\tau(x_2, \dots, x_k)$ is an level-0 extension of $\tau$; (ii) if $\tau'$ is a level-$i$ extension of $\tau$ ($0 \leq i < n$), then $\tau' \cup \{p_{i+1}(x_1, \dots, x_k)\}$  and $\tau' \cup \{\neg p_{i+1}(x_1, \dots, x_k)\}$ are 
	level-$(i+1)$ extensions of $\tau$. Thus, the level-$n$ extensions of $\tau$ are exactly 
	the fluted $k$-types over $\sigma$ extending
	$\tau(x_2, \dots, x_k)$. 
	If $\tau'$ is a level-$i$ extension of $\tau$ ($0 \leq i < n$), we say that $\tau'$ {\em violates} a clause $\delta$
	if, for every literal in $\gamma$, the opposite literal is in $\tau'$; we say that $\tau'$ {\em violates} a set of clauses
	$\Delta$ if $\tau'$ violates some $\delta \in \Delta$.
	Suppose now that $\tau'$ is a level-$i$ extension of $\tau$ ($0 \leq i < n$). We claim that, if both 
	$\tau' \cup \{p_{i+1}(x_1, \dots, x_k)\}$  and $\tau' \cup \{\neg p_{i+1}(x_1, \dots, x_k)\}$ violate $\Gamma^*$, then so does $\tau$. For otherwise, there must be a clause $\neg p_{i+1} \vee \gamma' \in \Gamma^*$ violated by $\tau' \cup \{p_{i+1}(x_1, \dots, x_k)\}$ and a clause $p_{i+1} \vee \gamma' \in \Gamma^*$ violated by $\tau' \cup \{\neg p_{i+1}(x_1, \dots, x_k)\}$. But in that case $\tau'$ violates the fluted resolvent $\gamma' \vee \delta'$,  contradicting the
	supposition that $\tau'$ does not violate $\Gamma^*$. This proves the claim. Now, since $\tau(x_2, \dots, x_{k})$ 
	is by hypothesis consistent with $\Gamma^\circ$, it certainly does not violate $\Gamma^\circ$. Moreover, since it involves no predicates of
	arity $k$, $\tau(x_2, \dots, x_{k})$ does not violate $\Gamma^*$ either. By the above claim, then,
	there must be at least one level-$n$ extension of $\tau$ which does not
	violate $\Gamma^* \supseteq \Gamma$. Since $\tau^+$ is a $k$-type, this proves the lemma.
\end{proof}

Before giving the decision procedure for $\FL^{k}$, we require one simple technical lemma.
\begin{lemma}
	Let $\fA$ be any structure, and let $z >0$. There exists a structure $\fB$ such that \textup{(}i\textup{)} if $\phi$ is any first-order formula without equality, then
	$\fA \models \phi$ if and only if $\fB \models \phi$;  \textup{(}ii\textup{)} 
	$|B| = z \cdot |A|$; and
	\textup{(}iii\textup{)}
	if $\psi(x_1, \dots, x_{k-1}) = \exists x_k \chi(x_1, \dots, x_k)$ is a first-order formula without equality, and $\fB \models \psi[b_1, \dots, b_{k-1}]$, then there exist at least $z$ distinct elements $b$ of $B$ such that $\fB \models \chi[b_1, \dots, b_{k-1}, b]$. 
	\label{lma:multiply}
\end{lemma}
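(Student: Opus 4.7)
The plan is to take $\fB$ to be the ``$z$-fold duplication'' of $\fA$: set $B = A \times \{1, \dots, z\}$, and for every relation symbol $p$ of arity $n$ in the signature, put
\begin{equation*}
p^\fB = \{((a_1,i_1), \dots, (a_n,i_n)) : (a_1,\dots,a_n) \in p^\fA, \ i_1,\dots,i_n \in \{1,\dots,z\}\}.
\end{equation*}
Let $\homm : B \to A$ denote the projection $(a,i) \mapsto a$. Condition (ii) is immediate.

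The main technical step is to prove, by a routine induction on the structure of equality-free formulas $\phi(x_1,\dots,x_n)$, the \emph{projection claim}: for every tuple $\bar{b} = (b_1,\dots,b_n) \in B^n$,
\begin{equation*}
\fB \models \phi[\bar{b}] \iff \fA \models \phi[\homm(\bar{b})].
\end{equation*}
The atomic case holds by construction of the $p^\fB$, the Boolean cases are trivial, and for the quantifier case $\phi = \exists x\, \chi$ (and similarly $\forall x\, \chi$) both directions are easy: from left to right pick any witness $(a,i) \in B$ and apply IH to obtain witness $a$ in $\fA$; from right to left pick $a \in A$ witnessing the formula in $\fA$ and lift to $(a,1) \in B$ via IH. Taking $n = 0$ in the projection claim yields (i).

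For (iii), suppose $\fB \models \exists x_k\, \chi [b_1,\dots,b_{k-1}]$. Pick any witness $(a,i) \in B$. By the projection claim, $\fA \models \chi[\homm(b_1),\dots,\homm(b_{k-1}),a]$. Applying the projection claim once more in the reverse direction, for every $j \in \{1,\dots,z\}$ we have $\fB \models \chi[b_1,\dots,b_{k-1},(a,j)]$, giving the required $z$ pairwise distinct witnesses $(a,1),\dots,(a,z)$.

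There is no real obstacle: the only subtlety worth flagging is why the absence of equality is essential---with equality the projection claim breaks, since e.g.\ $(a,1)$ and $(a,2)$ are distinct in $\fB$ but have the same $\homm$-image. This is exactly why the hypothesis ``without equality'' appears in (i) and (iii).
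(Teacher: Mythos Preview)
Your proof is correct and follows exactly the same construction as the paper, which simply takes $B = \{1,\dots,z\} \times A$ with the obvious lifted relations and then says ``it is simple to verify that $\fB$ has the desired properties.'' You have in fact supplied more detail than the paper does, including the projection claim and the remark on why equality must be excluded.
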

\begin{proof}
	Let $B = \{1, \dots, z\} \times A$. If $p$ is any predicate of arity $k$, set\linebreak
	$\langle (i_i, a_1), \dots, (i_k, a_k) \rangle \in p^\fB$ if and only if
	$\langle a_1, \dots, a_k \rangle \in p^\fA$. It is simple to verify that $\fB$ has the desired properties. 
\end{proof}

\begin{lemma}
	Let  $\phi$ be a normal-form $\FL^{m}$-formula \textup{(}$m \geq 3$\textup{)} 
	over some relational signature $\sigma$, and suppose that $\phi$ has $s$ existential conjuncts
	and $t$ universal conjuncts. 
	Then there exists a normal-form $\FL^{m-1}$-formula $\phi'$ over a relational signature $\sigma'$ such that the following hold:
	\textup{(}i\textup{)} $\phi'$ has at most $2^{t} s$ existential and $2^{t}$ universal conjuncts;
	\textup{(}ii\textup{)} $|\sigma'| \leq |\sigma| + 2^t(s+1)$;
	\textup{(}iii\textup{)} if $\phi$ has a model, so does $\phi'$; and
	\textup{(}iv\textup{)} if $\phi'$ has a model of size $M$, then $\phi$ has a model of size at most $sM$.
	\label{lma:eliminate}
\end{lemma}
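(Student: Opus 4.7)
The plan is to use fluted resolution (Lemma~\ref{lma:resolution}) to eliminate literals of arity $m$ from the clause sets appearing in $\phi$, to case-split on the $2^t$ subsets $S \subseteq \{1, \ldots, t\}$ recording which universal conjuncts are active at each $(m-1)$-tuple, and to introduce fresh predicates to book-keep the information lost in going from $m$ to $m-1$ variables. The factor $2^t$ in each of the stated bounds arises from this case split. Specifically, for each $S \subseteq \{1, \ldots, t\}$, I would form $\Delta_S = \Omega \cup \{\delta_j : j \in S\}$ and, for each $i$, $\Psi_{i,S} = \Delta_S \cup \Gamma_i$; the fluted-resolution projections $\Delta_S^\circ$ and $\Psi_{i,S}^\circ$ consist of fluted $m$-clauses over $x_2, \ldots, x_m$, which re-index as $(m-1)$-clause sets $\hat{\Delta}_S$ and $\hat{\Psi}_{i,S}$. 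Then introduce fresh predicates $q_S$, one per $S$, serving as the characteristic predicate of the $\beta_j$-state of each $(m-1)$-tuple, and $p_{i,S}$, one per $(i, S)$, flagging a witness tuple for existential $i$ at state $S$; this yields $|\sigma'| = |\sigma| + 2^t(s+1)$. Assemble $\phi'$ with a static conjunct that pins down the intended meaning of the $q_S$ and $p_{i,S}$; one universal conjunct per $S$ that, together with the static conjunct, enforces $\hat{\Delta}_S$ on tuples in state $S$ (giving the $2^t$ universal conjuncts); and one existential conjunct per $(i,S)$ that, when $\alpha_i$ and $q_S$ are jointly triggered, demands a $p_{i,S}$-flagged witness satisfying $\hat{\Psi}_{i,S}$ (giving the $2^t s$ existential conjuncts).

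Soundness (direction~(iii)) is routine: given $\fA \models \phi$, interpret each $q_S$ as the characteristic predicate of state $S$ and each $p_{i,S}$ as holding on the appropriate suffixes of existing witness tuples; the soundness of fluted resolution then ensures that $\fA$, so extended, satisfies $\phi'$. For the model-expansion direction~(iv), given $\fA' \models \phi'$ of size $M$, first apply Lemma~\ref{lma:multiply} with $z = s$ to obtain $\fA''$ of size $sM$ in which every satisfiable existential subformula admits at least $s$ distinct witnesses. Then extend $\fA''$ to a model of $\phi$ by defining the arity-$m$ predicates of $\sigma$: for each $m$-tuple $(a_1, \ldots, a_m)$, assign a fluted $m$-type extending the pre-existing $(m-1)$-types of $(a_1, \ldots, a_{m-1})$ and $(a_2, \ldots, a_m)$ and consistent with $\Psi_{i,S}$ if the tuple is a designated witness for existential $i$ (under state $S$), and with $\Delta_S$ otherwise. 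Lemma~\ref{lma:resolution} guarantees the existence of such an extension precisely when the corresponding projected clause set is consistent with the $(m-1)$-type of $(a_2, \ldots, a_m)$, which is exactly the condition $\phi'$ forces.

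The main obstacle will be the orchestration of this final construction: at each $(m-1)$-tuple $(a_1, \ldots, a_{m-1})$, several distinct existential conjuncts may be simultaneously active, each requiring its own $a_m$ whose $m$-type might be incompatible with that demanded by the others, so witnesses for different existential conjuncts must be assigned to disjoint $a_m$'s. The $s$-fold multiplication via Lemma~\ref{lma:multiply} supplies the necessary slack---at least $s$ disjoint candidate witnesses per existential---while the predicates $p_{i,S}$ in $\fA''$ identify which elements fulfil which witness roles. Once these witness assignments are made, the $m$-types of the remaining $m$-tuples are filled in greedily by Lemma~\ref{lma:resolution}, respecting the $\Delta_S$ constraints determined by their prefix, and no two assignments can clash because each $m$-tuple is constrained only through its two overlapping $(m-1)$-subtuples, whose $(m-1)$-types are already fixed.
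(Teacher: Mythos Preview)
Your approach is essentially the paper's: the same case-split over subsets $J \subseteq \{1, \ldots, t\}$, the same fresh predicates $p_{i,J}$ and $q_J$, the same use of fluted resolution to project out arity-$m$ literals (your $\hat{\Psi}_{i,S}$ and $\hat{\Delta}_S$ are exactly the paper's $(\Gamma_i \cup \Omega \cup \{\delta_j : j \in J\})^\circ$ and $(\Omega \cup \{\delta_j : j \in J\})^\circ$), and the same appeal to Lemmas~\ref{lma:multiply} and~\ref{lma:resolution} for direction~(iv).

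One point needs tightening, however. As you describe it, each existential conjunct of $\phi'$ is triggered by $\alpha_i \wedge q_S$ and then demands a witness $x_m$ satisfying $\hat{\Psi}_{i,S}$; but $\alpha_i$ is an $(m-1)$-atom in $x_1, \ldots, x_{m-1}$, while $\hat{\Psi}_{i,S}$ lives over $x_2, \ldots, x_m$, so the resulting conjunct uses all $m$ variables and $\phi'$ fails to lie in $\FL^{m-1}$. The fix (which is what the paper does) is to give the new predicates arity $m-2$, applied to $(x_2, \ldots, x_{m-1})$: the \emph{static} part records the implications $(\alpha_i \wedge \bigwedge_{j \in J} \beta_j) \rightarrow p_{i,J}(x_2, \ldots, x_{m-1})$ and $(\bigwedge_{j \in J} \beta_j) \rightarrow q_J(x_2, \ldots, x_{m-1})$ over $x_1, \ldots, x_{m-1}$, and the existential and universal conjuncts are then triggered by $p_{i,J}$ and $q_J$ \emph{alone}, hence live over $x_2, \ldots, x_m$ only. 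In particular, $p_{i,J}$ does not flag the witness; it flags the suffix $(a_2, \ldots, a_{m-1})$ of the \emph{trigger} tuple, and thereby supplies the single-atom antecedent the normal form requires. With this correction your argument goes through unchanged and coincides with the paper's.
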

\begin{proof}
	We repeat the form of $\phi$ given in~\eqref{eq:nf} for convenience:
	\begin{multline*}
	\forall x_1 \cdots x_{k} . \Omega \ \wedge
	{\bigwedge_{i=1}^{s} }\forall x_1 \cdots x_{k-1} \left(\alpha_i \rightarrow \exists x_k . \Gamma_i\right) \wedge
	{\bigwedge_{j=1}^{t}} \forall x_1 \cdots x_{k-1} (\beta_j \rightarrow \forall x_k . \delta_j).
	\end{multline*}
	Write $T = \{1, \dots, t\}$. For
	all $i$ ($1 \leq i \leq s$) and all $J \subseteq T$, let $p_{i,J}$ and $q_J$ be new
	predicates of arity $m-2$. Let $\phi'$ be the conjunction of the sentences
	\begin{align}
	&
	\bigwedge_{i=1}^{s} \bigwedge_{J \subseteq T} \forall x_1 \cdots x_{m-1}
	((\alpha_i \wedge \bigwedge_{j \in J} \beta_j )  \rightarrow p_{i,J}(x_2, \dots, x_{m-1}))
	\label{eq:phiStar1}\\
	&
	\bigwedge_{J \subseteq T}\forall x_1 \cdots x_{m-1}
	((\bigwedge_{j \in J} \beta_j) \rightarrow q_{J}(x_2, \dots, x_{m-1}))
	\label{eq:phiStar2}\\
	\begin{split}
	& 
	\bigwedge_{i=1}^{s} \bigwedge_{J \subseteq T} \forall x_2 \cdots x_{m-1}
	\left( p_{i,J}(x_2, \dots, x_{m-1})\rightarrow \right.\\
	& \hspace{3.25cm} \left. \exists x_m \left(\Gamma_i \cup \Omega \cup \{\delta_j \mid j \in J \}\right)^\circ \right) 
	\end{split} 
	\label{eq:phiStar3}\\ 
	&
	\bigwedge_{J \subseteq T}\forall x_2 \cdots x_{m-1}
	\left( q_{J}(x_2, \dots, x_{m-1})\rightarrow \forall x_m \left(\Omega \cup  \{\delta_j \mid j \in J \}\right)^\circ \right).
	\label{eq:phiStar4}
	\end{align}
	We claim that, if $\phi$ is satisfiable, then so is $\phi'$. 
	For suppose $\fA \models \phi$. We expand
	$\fA$ to a model $\fA' \models \phi'$ by setting, for all $i$ ($1 \leq i \leq s$) and all $J \subseteq T$,
	\begin{align*}
	\begin{split}
	p_{i,J}^{\fA'}  = \{ \langle a_2, & \dots, a_{m-1} \rangle \mid 
	\text{ for some $a_1 \in A$: }\\ 
	& \text{$\fA \models \alpha_i[a_1, \dots, a_{m-1}]$ and 
		$\fA \models \beta_j[a_1, \dots, a_{m-1}]$ for all $j \in J$}  \}
	\end{split}\\
	\begin{split}
	q_{J}^{\fA'}  = \{ \langle a_2, & \dots, a_{m-1} \rangle \mid 
	\text{ for some $a_1 \in A$: }\\ 
	& \text{$\fA \models \beta_j[a_1, \dots, a_{m-1}]$ for all $j \in J$}  \}.
	\end{split}
	\end{align*}
	To see that $\fA' \models \phi'$, we simply check the truth of conjuncts~\eqref{eq:phiStar1}--\eqref{eq:phiStar4} in $\fA'$ in turn.
	Sentences~\eqref{eq:phiStar1} and~\eqref{eq:phiStar2} are immediate. For~\eqref{eq:phiStar3}, fix $i$ and $J$, and suppose
	$\fA' \models p_{i,J}[a_2, \dots, a_{m-1}]$. By the definition of $\fA'$,
	let $a_1 \in A$ be such that $\fA \models \alpha_i[a_1, \dots, a_{m-1}]$ and 
	$\fA \models \beta_j[a_1, \dots, a_{m-1}]$ for all $j \in J$. Since $\fA \models \phi$, there exists
	$b$ such that $\fA \models \Gamma_i[a_1, \dots, a_{m-1},b]$, \mbox{$\fA \models \Omega[a_1, \dots, a_{m-1},b]$} and\linebreak
	$\fA \models \delta_j[a_1, \dots, a_{m-1},b]$ for all $j \in J$. Since resolution is a valid inference step,
	$\fA' \models \left(\Gamma_i \cup \Omega \cup \{\delta_j \mid j \in J \}\right)^\circ[a_2, \dots, a_{m-1},b]$. This
	establishes the truth of~\eqref{eq:phiStar3} in $\fA'$. Sentence~\eqref{eq:phiStar4} is handled similarly.
	
	Conversely, we claim that, if 
	$\phi'$ is satisfiable over a domain $A$, then $\phi$  is satisfiable over a domain of size  $s \cdot |A|$.
	For suppose $\fA \models \phi'$. Let $\fB$ be the model of $\phi'$ guaranteed by Lemma~\ref{lma:multiply}, where $z= s$. We may assume that $\fA$ and hence $\fB$ interpret no predicates of arity $m$.
	We proceed to expand $\fB$ to a model $\fB' \models \phi$ by interpreting the predicates of arity $m$ occurring in $\phi$.      
	Pick any tuple $\langle a_1, a_2, \dots, a_{m-1} \rangle$ from $B$, and let $J$ be the set of all $j$ ($1 \leq j \leq t$)
	such that $\fB \models \beta_j[a_1, a_2, \dots, a_{m-1}]$. Suppose also that, for some $i$ ($1 \leq i \leq s$)
	$\fA \models \alpha_i[a_1, a_2, \dots, a_{m-1}]$. 
	From~\eqref{eq:phiStar1}, $\fB \models p_{i,J}[a_2, \dots, a_{m-1}]$; and
	from~\eqref{eq:phiStar3},
	we may pick $b_i \in B$ such that
	$\fB \models \left(\Gamma_i \cup \Omega \cup \{\delta_j \mid j \in J \}\right)^\circ[a_2, \dots, a_{m-1},b_i]$. From the properties secured for $\fB$ by Lemma~\ref{lma:multiply}, we know that if, for fixed $a_1, \dots, a_{m-1}$,
	we have $\fA \models \alpha_i[a_1, a_2, \dots, a_{m-1}]$ for more than one value of $i$, then
	we may choose the corresponding elements $b_{i}$ so that they are all distinct.  
	For each such $b_i$, then, let $\tau_i = \tpB[a_2, \dots, a_{m-1},b_{i}]$. Thus, $\tau_i(x_2, \dots, x_m)$ is consistent
	with $\left(\Gamma_i \cup \Omega \cup \{\delta_j \mid j \in J \}\right)^\circ$. By Lemma~\ref{lma:resolution}, there exists a 
	fluted $m$-type $\tau_i^+ \supseteq \tau_i(x_2, \dots, x_m)$ such that $\tau_i^+$ is consistent with $\Gamma_i \cup \Omega \cup  \{\delta_j \mid j \in J \}$. Set $\tpBp[a_1,a_2, \dots, a_{m-1},b_{i}] = \tau^+_i$. Since $\tau^+_i \supseteq \tau_i(x_2, \dots, x_m)$,
	only predicates of arity $m$ are being assigned, so that there is no clash with $\fB$. Moreover, since the $b_i$ are all distinct,
	for a given tuple $a_1, \dots, a_{m-1}$, these assignments do not clash with each other. In this way, every existential conjunct of $\phi$ is witnessed in $\fB'$ for the tuple
	$a_1,a_2, \dots, a_{m-1}$, and no static or universal conjunct of $\phi$ is violated for the tuples from $B$ for which the $m$-ary predicates of $\sigma$ have been defined.
	Now let $\langle a_1, a_2, \dots, a_{m-1}, a_m \rangle$ be any tuple from $B$ for which the $m$-ary predicates
	of $\sigma$ have not been defined, and let $J$ be the set of all $j$ ($1 \leq j \leq m$)
	such that $\fB \models \beta_j[a_1, a_2, \dots, a_{m-1}]$. From~\eqref{eq:phiStar2},
	$\fB \models q_J[a_2, \dots, a_{m-1}]$; and from~\eqref{eq:phiStar4},\linebreak $\fB \models \left(\Omega \cup  \{\delta_j \mid j \in J \} \right)^\circ[a_2, \dots, a_{m-1}, a_m]$. 
	Let $\tau= \tpB[a_2, \dots, a_{m-1},a_m]$. 
	Hence $\tau(x_2, \dots, x_m)$ is consistent with 
	$\left( \{\delta_j \mid j \in J \} \right)^\circ$. By Lemma~\ref{lma:resolution}, 
	there exists a 
	fluted $m$-type $\tau^+ \supseteq \tau(x_2, \dots, x_m)$ such that $\tau^+$ is consistent 
	with $\Omega \cup  \{\delta_j \mid j \in J \}$. Set $\tpBp[a_1,a_2, \dots, a_{m-1},a_m] = \tau^+$. 
	Since $\tau^+ \supseteq \tau(x_2, \dots, x_m)$,
	only predicates of arity $m$ are being assigned, so that there is no clash with $\fB$. Evidently, no static or universal conjunct of $\phi$ is violated in this process. Thus, $\fB' \models \phi$, as required.
	
	Conjuncts~\eqref{eq:phiStar1} and~\eqref{eq:phiStar2} do not involve $x_m$, while~\eqref{eq:phiStar3} and~\eqref{eq:phiStar4} do not involve $x_1$. Thus, by decrementing variable indices where necessary,
	$\phi'$ becomes a formula of $\FL^{m-1}$. Indeed, re-writing implications as clauses in the
	obvious way, we have a formula in normal form: formulas
	\eqref{eq:phiStar1} and~\eqref{eq:phiStar2} give the static conjunct; formula~\eqref{eq:phiStar3} gives the $2^t s$
	existential conjuncts; and formula~\eqref{eq:phiStar4} gives the $2^t$ 
	universal conjuncts. Exactly $2^t(s +1)$ predicates of arity \mbox{${(m-2)}$} have been added to the signature, and some $m$-ary predicates may have been lost.
\end{proof}

\begin{lemma}
	Let $\phi$ be a normal-form formula of $\FL^m$ \textup{(}$m\geq 3$\textup{)} over a signature $\sigma$, and suppose that $\phi$ has
	$s$ existential conjuncts \textup{(}$s \geq 3$\textup{)} and $t$ universal conjuncts. If $\phi$ is satisfiable, then it is satisfiable over a domain of size at most $\tower{m-2}{2|\sigma|+s+t}$. 
	\label{lma:smallModels}
\end{lemma}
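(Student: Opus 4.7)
I would prove this by induction on $m \geq 3$, anchored by Lemma~\ref{lma:construction} and driven by Lemma~\ref{lma:eliminate}.

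For the base case $m=3$, Lemma~\ref{lma:construction} yields a model of size at most $s\cdot 2^{2|\sigma|}$. Since $s\leq 2^s$, this is bounded by $2^{2|\sigma|+s+t}=\tower{1}{2|\sigma|+s+t}$, which is exactly the required bound (with considerable slack in the exponent, provided by $t$).

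For the inductive step with $m\geq 4$, assume the statement for $m-1$ and apply Lemma~\ref{lma:eliminate} to obtain a normal-form $\FL^{m-1}$-formula $\phi'$ over a signature $\sigma'$, having $s'\leq 2^ts$ existential and $t'\leq 2^t$ universal conjuncts, and with $|\sigma'|\leq |\sigma|+2^t(s+1)$. By part~(iii), $\phi'$ is satisfiable, so the induction hypothesis provides a model of $\phi'$ of size at most $\tower{m-3}{2|\sigma'|+s'+t'}$; and by part~(iv), $\phi$ then has a model of size at most $s\cdot\tower{m-3}{2|\sigma'|+s'+t'}$.

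It remains to verify that this last quantity is at most $\tower{m-2}{2|\sigma|+s+t}$. Substituting the bounds from Lemma~\ref{lma:eliminate}, the exponent inside the tower is bounded by
\[
2|\sigma'|+s'+t'\leq 2|\sigma|+3\cdot 2^t(s+1)\leq 2|\sigma|+2^{s+t+2}.
\]
Using the tetration identity $\tower{m-2}{x}=\tower{m-3}{2^x}$ (valid for $m-3\geq 1$) and the elementary fact that, for $m-3\geq 1$, a leading factor of $s$ can be absorbed into the exponent (e.g.~$s\cdot\tower{m-3}{N}\leq\tower{m-3}{N+s}$), the goal reduces to the numerical inequality
\[
2|\sigma|+s+2^{s+t+2}\leq 2^{2|\sigma|+s+t},
\]
which is routine since $s\geq 3$ and $|\sigma|$ must be large enough to accommodate the predicates appearing in the $\alpha_i,\beta_j,\Gamma_i,\delta_j$ and $\Omega$.

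The main obstacle is precisely this last piece of bookkeeping: one has to confirm that the stated bound $\tower{m-2}{2|\sigma|+s+t}$ is loose enough that, after Lemma~\ref{lma:eliminate} inflates the signature by up to $2^t(s+1)$ and multiplies the numbers of conjuncts by $2^t$, one application of the tetration identity $\tower{m-2}{x}=\tower{m-3}{2^x}$ suffices to absorb all the blow-up in a single step of the induction. Given the rapid growth of the tower function, the slack built into the exponent $2|\sigma|+s+t$ is more than enough, and the induction closes.
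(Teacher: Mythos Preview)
Your approach is identical to the paper's: induction on $m$, with the base case handled by Lemma~\ref{lma:construction} and the inductive step by Lemma~\ref{lma:eliminate}, followed by arithmetic to close the induction. The paper even absorbs the leading factor $s$ into the tower argument exactly as you do, via $s\cdot\tower{m-3}{N}\leq\tower{m-3}{N+s}$.

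There is, however, a small but genuine arithmetic slip in your final step. The inequality
\[
2|\sigma|+s+2^{s+t+2}\leq 2^{\,2|\sigma|+s+t}
\]
is \emph{not} true in general: take $|\sigma|=1$, $s=3$, $t=0$, and the left side is $2+3+32=37$ while the right side is $2^{5}=32$. Your escape hatch---that $|\sigma|$ must be large enough to accommodate the predicates in the $\alpha_i$, $\beta_j$, $\Gamma_i$, $\delta_j$---does not work, because nothing prevents the $\alpha_i$ (or the $\beta_j$) from repeating the same predicate; one can have $s$ and $t$ much larger than $|\sigma|$. The fix is simply not to over-loosen: keep the bound $2|\sigma'|+s'+t'\leq 2|\sigma|+3\cdot 2^{t}(s+1)$ rather than passing to $2|\sigma|+2^{s+t+2}$. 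Then the required inequality becomes
\[
2|\sigma|+3\cdot 2^{t}(s+1)+s \;\leq\; 2^{\,2|\sigma|+s+t},
\]
which \emph{does} hold for all $|\sigma|\geq 1$, $s\geq 3$, $t\geq 0$, and this is precisely the estimate the paper verifies.
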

\begin{proof}
	We proceed by induction on $m$. If $m=3$, Lemma~\ref{lma:construction} guarantees that 
	$\phi$ has a model of size $s \cdot 2^{2|\sigma|} \leq 2^{2|\sigma|+s+t}$.
	
	Suppose, then $m>3$, and that the lemma holds for smaller values of $m$. Let $\phi'$ be the
	formula guaranteed by Lemma~\ref{lma:eliminate}, so that $\phi'$ is satisfiable.
	By inductive hypothesis $\phi'$ has a model of size at most $M=\tower{m-3}{2(|\sigma|+2^t(s+1))+2^t s+2^t}$,
	whence, by property (iv) of Lemma~\ref{lma:eliminate},
	$\phi$ has a model of size at most 
	\begin{align*}
	s\cdot\tower{m-3}{2(|\sigma|+2^t(1+s))+2^t s+2^t} & = s\cdot\tower{m-3}{2|\sigma|+3\cdot 2^t(1+s)}\\
	& \leq \tower{m-3}{2|\sigma|+3\cdot 2^t(1+s)+s}.
	\end{align*}
	A routine calculation shows that, for $s \geq 3$, $|\sigma| \geq 1$ and $t \geq 0$,
	$2|\sigma|+3\cdot 2^t(1+s)+s \leq 2^{2|\sigma| + s+ t}$. 
	This completes the inductive step, and proves the lemma.
\end{proof}
% Irritating calculations for the inequality
%
% Write z = 2^t and x = 2|\sigma|. Thus x >= 2,
% and >= 1. Assume also s >= 3. 
%
% When t = 0, s = 3, LHS  is x + 15 and RHS is 8. 2^x , 
% so LHS <= RHS. Also, when t = 0 LHS is x + 4s + 3 and
% RHS is 2^{x+s} so incrementing s increases LSH by 4 and 
% RHS by 2^x >= 4, so, if t = 0, the inequality holds for all
% x and s >=3. 
%
% Now incrementing t doubles z and so increases LHS by 3z(1+s)
% and RHS by 2^{x+s}z. So it suffices to show that 3.(1+s) <= 
% 2^{x+s} for all x >= 2 and s >= 3. This is easily checked for 
% s = 3. But incrementing s increases LHS by 3 and doubles RHS,
% so this is true. 

Lemmas~\ref{lma:nf} and~\ref{lma:smallModels} thus yield the following upper complexity bounds.
\begin{theorem}
	If $m \geq 3$, then any satisfiable formula of $\FL^{m}$ has a model of
	$(m-2)$-tuply exponential size. Hence,
	the satisfiability problem for $\FL^{m}$ is in $(m-2)$-\NExpTime.
	\label{theo:upper}
\end{theorem}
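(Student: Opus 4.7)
The plan is to obtain the theorem as a direct corollary of the two preceding lemmas, with only a cosmetic adjustment needed to discharge the hypothesis $s \geq 3$ of Lemma~\ref{lma:smallModels}. Given an $\FL^{m}$-formula $\phi$, I would first invoke Lemma~\ref{lma:nf} to compute, in polynomial time, a normal-form $\FL^{m}$-formula $\psi$ over some signature $\sigma$ such that $\psi$ entails $\phi$ and every model of $\phi$ expands to one of $\psi$. In particular $\phi$ and $\psi$ are equisatisfiable, and since the reduction is polynomial, each of $|\sigma|$, the number $s$ of existential conjuncts of $\psi$, and the number $t$ of universal conjuncts of $\psi$ is bounded by some polynomial $p(\sizeOf{\phi})$.

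To satisfy the technical condition $s \geq 3$ required by Lemma~\ref{lma:smallModels}, I would, if necessary, append at most three trivial existential conjuncts of the form $\forall x_1\cdots x_{m-1}(\top \rightarrow \exists x_m.\,\top)$, which preserve satisfiability and leave $|\sigma|$, $s$ and $t$ polynomial in $\sizeOf{\phi}$. Applying Lemma~\ref{lma:smallModels} to the resulting formula then gives, assuming satisfiability, a model of cardinality at most $\tower{m-2}{2|\sigma|+s+t} \leq \tower{m-2}{q(\sizeOf{\phi})}$ for some polynomial $q$. Reducing this model to the signature of $\phi$ yields a model of $\phi$ of the same size, so $\phi$ itself enjoys a model of $(m-2)$-tuply exponential size.

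The complexity claim is then a standard guess-and-check argument. A nondeterministic algorithm guesses a structure $\fA$ of cardinality at most $\tower{m-2}{q(\sizeOf{\phi})}$ over the signature of $\phi$, which can be written down in $(m-2)$-tuply exponential space by recording the extensions of its finitely many predicates, and then verifies $\fA \models \phi$ deterministically in time polynomial in $|A|^{m}\cdot\sizeOf{\phi}$, i.e.\ still $(m-2)$-tuply exponential in $\sizeOf{\phi}$. Hence the satisfiability problem for $\FL^{m}$ lies in $(m-2)$-\NExpTime.

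There is really no ``hard part'' left at this stage: the conceptual work is done inside Lemma~\ref{lma:smallModels} (which in turn rests on Lemma~\ref{lma:eliminate} and the base case Lemma~\ref{lma:construction}), and all that remains here is bookkeeping. The only point worth verifying carefully is that the padding step and the polynomial-size normal form together keep the exponent $2|\sigma|+s+t$ polynomial in $\sizeOf{\phi}$, so that tetration of height $m-2$ over this exponent indeed falls within the $(m-2)$-\NExpTime{} bound.
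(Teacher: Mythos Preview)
Your proposal is correct and follows precisely the route the paper takes: the theorem is stated immediately after Lemma~\ref{lma:smallModels} with only the remark that it follows from Lemmas~\ref{lma:nf} and~\ref{lma:smallModels}, and you have simply spelled out the bookkeeping (including the padding to ensure $s \geq 3$ and the guess-and-check complexity argument) that the paper leaves implicit. One cosmetic point: the normal form requires each $\alpha_i$ to be a fluted $(m{-}1)$-atom rather than $\top$, so your padding conjuncts should use a fresh predicate asserted universally in the static part, but this changes nothing of substance.
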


Thus, for $m \geq 3$, the satisfiability problem for $\FL^m$ lies between $\lfloor m/2 \rfloor$-
\linebreak
\NExpTime-hard
and $(m-2)$-\NExpTime. It is conceivable that, by using appropriate data-structures in place of connector-types, Lemma~\ref{lma:eliminate}
might be generalized to yield an improved upper-bound for all values of $m$. The present authors have, however, been unable to do so, even for the value $m=5$. Small-model properties---and hence upper complexity bounds---for values of $m$ up to 2 are easily derivable from known results. Trivially,
all satisfiable $\FL^0$-formulas have models with 1-element domains; and 
since $\FL^1$
is identical to the 1-variable fragment of first-order logic, any satisfiable $\FL^1$-formula $\phi$ has a model of size bounded by $\sizeOf{\phi}$. Moreover, $\FL^2$ is contained within the 2-variable fragment of first-order logic, which is shown to have the exponential-sized model property by Gr\"{a}del, Kolaitis and Vardi~\cite{purdy:gkv97}. (Note that,
for the fluted fragment, Lemma~\ref{lma:construction} above strengthens this to three variables.) 
Taking ``0-\NExpTime'' to mean ``\NPTime'', and noting that $4 -2 = 4/2$, we see that $\FL^m$ is 
$\lfloor m/2 \rfloor$-\NExpTime-complete for all $m$ up to the value 4.

\bibliographystyle{plain}

\end{document}